\documentclass[11pt]{amsart}
\usepackage[utf8]{inputenc}
\usepackage{color}
\usepackage{mathtools}
\usepackage{amsbsy}
\usepackage{amstext}
\usepackage{amsthm}
\usepackage{amssymb}
\usepackage{geometry}
\usepackage{microtype}
\usepackage[unicode=true,
 bookmarks=false,
 breaklinks=false,pdfborder={0 0 1},backref=false,colorlinks=true]
 {hyperref}
\hypersetup{
 citecolor=blue!55!black,linkcolor=blue!55!black,urlcolor=blue!55!black}

\makeatletter
\numberwithin{equation}{section}
\numberwithin{figure}{section}

\usepackage{xcolor}
\makeatother

\theoremstyle{plain}
\newtheorem{thm}{\protect\theoremname}
\theoremstyle{definition}
\newtheorem{defn}[thm]{\protect\definitionname}
\theoremstyle{plain}
\newtheorem{prop}[thm]{\protect\propositionname}
\theoremstyle{remark}
\newtheorem{rem}[thm]{\protect\remarkname}
\theoremstyle{plain}
\newtheorem{lem}[thm]{\protect\lemmaname}
\newtheorem{cor}[thm]{\protect\corollaryname}
\providecommand{\corollaryname}{Corollary}
\providecommand{\definitionname}{Definition}
\providecommand{\lemmaname}{Lemma}
\providecommand{\propositionname}{Proposition}
\providecommand{\remarkname}{Remark}
\providecommand{\theoremname}{Theorem}

\begin{document}
\title{Non-Uniform Random Scans in Gibbs Sampling and CAVI}
\author{Sam Power}
\begin{abstract}
Gibbs sampling and coordinate ascent variational inference (CAVI)
are two basic coordinate-wise methods for statistical computation.
Recent analyses under strong log-concavity establish convergence rates
for versions of these algorithms that update one uniformly selected
block at each step. We extend both results to arbitrary fixed, strictly
positive selection probabilities. The rates are governed by a selection-adapted
convexity constant $\lambda^{\star}_{\theta}$, defined using the
block-smoothness constants and the selection probabilities $\theta$.
The same constant yields a contraction of relative entropy for the
Gibbs sampler and a contraction of the mean-field objective gap for
random-scan CAVI. The new bounds recover the uniform-scan results,
and are never weaker than the naive comparison based on the smallest
selection probability. They provide a principled way to adapt the
scan to heterogeneous block geometry using curvature information. 
\end{abstract}

\maketitle

\section{Introduction}

Many statistical algorithms update a high-dimensional parameter one
block at a time. Two canonical examples are the Gibbs sampler, which
replaces a block by a draw from its full conditional distribution,
and coordinate ascent variational inference (CAVI), which replaces
one factor of a mean-field approximation by its exact coordinate minimiser.
Random-scan versions select the block to update at random. They are
simple to implement, preserve the coordinate structure of the model,
and admit close analogies with randomised coordinate descent methods
in optimisation.

Recent work gives sharp convergence guarantees for these methods under
strong log-concavity. Ascolani, Lavenant, and Zanella prove contraction
of relative entropy for the random-scan Gibbs sampler \cite{ascolani2026entropy},
while Lavenant and Zanella prove geometric convergence of random-scan
CAVI \cite{lavenant2024cavi}. Both analyses select the blocks uniformly.
The purpose of this paper is to extend their conclusions to an arbitrary
fixed scan distribution 
\[
\theta=\left(\theta_{1},\ldots,\theta_{M}\right),\qquad\theta_{m}>0,\qquad\sum^{M}_{m=1}\theta_{m}=1.
\]
Although the change in the algorithm is elementary, the uniform-scan
proofs do not extend by a direct substitution of weights. Our main
contribution is to identify the geometry in which the selection probabilities
and the curvature of the target combine correctly. 

There are good practical reasons to consider non-uniform scans. Early
work of Levine and Casella develops random-scan Gibbs samplers that
learn or optimise their selection probabilities \cite{levine2006optimizing}.
Chimisov, Łatuszyński, and Roberts propose an adaptive random-scan
Gibbs sampler guided by the spectral gap of a Gaussian surrogate target,
and report practical gains for truncated Gaussian distributions, hierarchical
models, and hidden Markov models \cite{chimisov2018adapting}. Related
application-specific schemes select latent coordinates or observations
more frequently when their updates are expected to be informative
\cite{casarin2024multiple,fabbrico2025discomfort}. Adaptive component-wise
methods also have a longer history in Bayesian variable selection
\cite{nott2005adaptive}. These works address different adaptation
criteria, but collectively show that uniform allocation can waste
substantial computation in heterogeneous models, and that non-uniform
strategies can be of substantial practical interest.

The quantity governing our results is the \emph{selection-adapted
convexity constant} $\lambda^{\star}_{\theta}$. For a target distribution
$\pi\propto\exp\left(-U\right)$, this measures the convexity of the
non-separable part of the potential $U$ in a norm which is informed
by both the block-smoothness structure of $U$ and the frequencies
with which each block is updated. Our first main result proves that
the non-uniform Gibbs kernel contracts relative entropy by the factor
$1-\lambda^{\star}_{\theta}$. Our second proves that random-scan
CAVI contracts the expected suboptimality of the mean-field variational
inference objective by the same factor. With uniform probabilities,
both results reduce to the rates in \cite{ascolani2026entropy,lavenant2024cavi}.
For a general scan, they are always at least as strong as the crude
bound obtained by comparing every $\theta_{m}$ with $\min_{j}\theta_{j}$,
and in specific cases, they can be considerably better.

The paper is organised as follows. Section~\ref{sec:setup} introduces
the two algorithms and selection-adapted convexity. Section~\ref{sec:results}
states the main convergence results and compares them with uniform
and naive bounds. Section~\ref{sec:examples} develops tools for
Gaussian scan design and illustrates the possible gains in statistical
models. Section~\ref{subsec:interpolation} isolates the common weighted
interpolation estimate. Sections~\ref{subsec:gibbs-proof} and \ref{subsec:cavi-proof}
give the Gibbs and CAVI proofs, respectively. Section~\ref{sec:discussion}
discusses scan design, reparameterisation, and computational costs.

\section{Coordinate-wise computation and selection-adapted convexity}\label{sec:setup}

\subsection{Target distribution and block geometry}

Let $\mathcal{X}=\mathcal{X}_{1}\times\cdots\times\mathcal{X}_{M}$,
$\mathcal{X}_{m}=\mathbb{R}^{d_{m}}$, $d=\sum^{M}_{m=1}d_{m}$. For
$x=\left(x_{1},\ldots,x_{M}\right)$, write $x_{-m}$ for all blocks
except $x_{m}$, and write $\left(y_{m},x_{-m}\right)$ for the point
obtained by replacing $x_{m}$ by $y_{m}$. Let $\pi$ have normalised
Lebesgue density 
\begin{equation}
\pi\left(\mathrm{d}x\right)=\exp\left(-U\left(x\right)\right)\,\mathrm{d}x,\qquad\int_{\mathbb{R}^{d}}\exp\left(-U\left(x\right)\right)\,\mathrm{d}x=1.\label{eq:target}
\end{equation}

We use the following blockwise version of the assumptions in \cite{ascolani2026entropy}.

\medskip{}
\noindent \textbf{Assumption A.} The potential decomposes as 
\begin{equation}
U\left(x\right)=U_{0}\left(x\right)+\sum^{M}_{m=1}U_{m}\left(x_{m}\right),\label{eq:decomposition}
\end{equation}
where each $U_{m}:\mathbb{R}^{d_{m}}\to\mathbb{R}$ is convex. The
function $U_{0}\in C^{1}\left(\mathbb{R}^{d}\right)$ is convex and
block-$L_{m}$-smooth: for every $x\in\mathbb{R}^{d}$, $y_{m}\in\mathbb{R}^{d_{m}}$,
and $m$, 
\begin{equation}
\begin{aligned}U_{0}\left(y_{m},x_{-m}\right) & \leq U_{0}\left(x\right)+\left\langle \nabla_{m}U_{0}\left(x\right),y_{m}-x_{m}\right\rangle +\frac{L_{m}}{2}\left\Vert y_{m}-x_{m}\right\Vert ^{2},\end{aligned}
\label{eq:block-smooth}
\end{equation}
where $L_{m}>0$.

Fix a scan distribution $\theta$ in the interior of the probability
simplex and define 
\begin{equation}
\left\Vert x\right\Vert ^{2}_{L,\theta}:=\sum^{M}_{m=1}\frac{L_{m}}{\theta_{m}}\left\Vert x_{m}\right\Vert ^{2},\qquad\mathbf{D}_{\theta}:=\mathsf{diag}\left(\frac{L_{1}}{\theta_{1}}\mathbf{I}_{d_{1}},\ldots,\frac{L_{M}}{\theta_{M}}\mathbf{I}_{d_{M}}\right),\qquad\mathbf{G}_{\theta}:=\mathbf{D}^{-1}_{\theta}.\label{eq:weighted-geometry}
\end{equation}

\begin{defn}[Selection-adapted convexity]
The selection-adapted convexity constant $\lambda^{\star}_{\theta}$
is the largest $\lambda\geq0$ such that 
\begin{equation}
U_{0}\left(y\right)\geq U_{0}\left(x\right)+\left\langle \nabla U_{0}\left(x\right),y-x\right\rangle +\frac{\lambda}{2}\left\Vert y-x\right\Vert ^{2}_{L,\theta}\label{eq:selection-convexity}
\end{equation}
for every $x,y\in\mathbb{R}^{d}$. If $U_{0}\in C^{2}\left(\mathbb{R}^{d}\right)$,
then this is equivalent to $\nabla^{2}U_{0}\left(x\right)\succeq\lambda^{\star}_{\theta}\mathbf{D}_{\theta}$
for every $x\in\mathbb{R}^{d}$.
\end{defn}

Block smoothness implies that $0\leq\lambda^{\star}_{\theta}\leq\theta_{\min}:=\min_{1\leq m\leq M}\theta_{m}$.
Indeed, restricting \eqref{eq:selection-convexity} to block $m$
shows that $\lambda^{\star}_{\theta}L_{m}/\theta_{m}\leq L_{m}$.

The decomposition \eqref{eq:decomposition} is not canonical: separable
terms may be allocated between $U_{0}$ and the functions $U_{m}$
in more than one way, provided that Assumption~A remains satisfied.
The value of $\lambda^{\star}_{\theta}$ should therefore be understood
relative to the chosen decomposition, the chosen valid smoothness
constants, and the Euclidean coordinates within each block. 

The factors $L_{m}/\theta_{m}$ give the weighted geometry a direct
interpretation. Reducing $\theta_{m}$ enlarges the metric in block
$m$ and makes the convexity inequality harder to satisfy in that
direction. $\lambda^{\star}_{\theta}$ thereby penalises scans which
update a block too infrequently relative to its smoothness and its
role in weakly curved, coupled directions. Conversely, probability
can be shifted away from directions whose curvature is ample relative
to their block smoothness and towards those which limit the contraction.
The bound $\lambda^{\star}_{\theta}\leq\theta_{\min}$ records that
no block can be ignored; improvements over uniform scanning arise
by matching the allocation to the anisotropy of the target.

\subsection{The non-uniform Gibbs sampler}

For each $m$, let $P_{m}$ be the Gibbs kernel that leaves $x_{-m}$
fixed and samples the new block from $\pi\left(\mathrm{d}x_{m}\mid x_{-m}\right)$.
The non-uniform random-scan kernel is 
\begin{equation}
P^{\theta}:=\sum^{M}_{m=1}\theta_{m}P_{m}.\label{eq:gibbs-kernel}
\end{equation}
The chain rule for relative entropy and exact conditional resampling
give the variational identity 
\begin{equation}
\mathsf{KL}\left(\mu P_{m}\mid\pi\right)=\mathsf{KL}\left(\mu_{-m}\mid\pi_{-m}\right)=\inf_{\nu:\,\nu_{-m}=\mu_{-m}}\mathsf{KL}\left(\nu\mid\pi\right),\label{eq:gibbs-variational}
\end{equation}
whenever $\mathsf{KL}\left(\mu\mid\pi\right)<+\infty$.

\subsection{Non-uniform random-scan CAVI}

Let $\mathcal{P}^{\otimes M}_{2}\left(\mathcal{X}\right)$ denote
the product probability measures on $\mathcal{X}$ with finite second
moments, and define 
\begin{equation}
F\left(q\right):=\mathsf{KL}\left(q\mid\pi\right).\label{eq:mean-field-objective}
\end{equation}
Whenever $\lambda^{\star}_{\theta}>0$, this objective has a unique
minimiser over $\mathcal{P}^{\otimes M}_{2}\left(\mathcal{X}\right)$,
which we denote by 
\begin{equation}
q^{\star}:=\arg\min_{q\in\mathcal{P}^{\otimes M}_{2}\left(\mathcal{X}\right)}F\left(q\right).\label{eq:mean-field-minimizer}
\end{equation}
Indeed, strong convexity of $U_{0}$ gives coercivity, so existence
follows from the direct method and lower semicontinuity of relative
entropy; the class of product measures is closed under weak limits.
Uniqueness follows from strict displacement convexity along factorwise
quadratic Wasserstein geodesics. The same argument gives existence
and uniqueness of each coordinate minimiser below whenever $F\left(q\right)<+\infty$. 

For $q=q_{1}\otimes\cdots\otimes q_{M}$, the exact CAVI update of
factor $m$ is 
\begin{equation}
\mathcal{C}_{m}q:=\arg\min_{\nu_{m}\in\mathcal{P}_{2}\left(\mathcal{X}_{m}\right)}F\left(\nu_{m}\otimes q_{-m}\right).\label{eq:cavi-update}
\end{equation}
Equivalently, its updated factor has density 
\begin{equation}
\left(\mathcal{C}_{m}q\right)_{m}\left(\mathrm{d}x_{m}\right)=\frac{1}{Z_{m}\left(q_{-m}\right)}\exp\left(-\int_{\mathcal{X}_{-m}}U\left(x_{m},x_{-m}\right)q_{-m}\left(\mathrm{d}x_{-m}\right)\right)\,\mathrm{d}x_{m}.\label{eq:cavi-density}
\end{equation}
Let $I_{1},I_{2},\ldots$ be independent and identically distributed
with $\mathbf{P}\left(I_{n}=m\right)=\theta_{m}$, and set $q^{\left(n\right)}=\mathcal{C}_{I_{n}}q^{\left(n-1\right)}$.
These iterations define the random-scan CAVI algorithm.

\section{Main results}\label{sec:results}

Our first result is a weighted analogue of the approximate tensorisation
and entropy contraction results of \cite{ascolani2026entropy}.
\begin{thm}[Weighted entropy contraction]
\label{thm:gibbs} Suppose that Assumption~A holds and $\lambda^{\star}_{\theta}>0$.
For every probability measure $\mu$ satisfying $\mathsf{KL}\left(\mu\mid\pi\right)<+\infty$,
\begin{equation}
\sum^{M}_{m=1}\theta_{m}\mathsf{KL}\left(\mu_{-m}\mid\pi_{-m}\right)\leq\left(1-\lambda^{\star}_{\theta}\right)\mathsf{KL}\left(\mu\mid\pi\right).\label{eq:weighted-tensorization}
\end{equation}
Equivalently, 
\begin{equation}
\lambda^{\star}_{\theta}\mathsf{KL}\left(\mu\mid\pi\right)\leq\sum^{M}_{m=1}\theta_{m}\mathbf{E}_{\mu_{-m}}\left[\mathsf{KL}\left(\mu\left(\mathord{\cdot}\mid X_{-m}\right)\mathrel{\big|}\pi\left(\mathord{\cdot}\mid X_{-m}\right)\right)\right].\label{eq:weighted-conditional}
\end{equation}
Consequently, 
\begin{equation}
\mathsf{KL}\left(\mu P^{\theta}\mid\pi\right)\leq\left(1-\lambda^{\star}_{\theta}\right)\mathsf{KL}\left(\mu\mid\pi\right).\label{eq:gibbs-contraction}
\end{equation}
\end{thm}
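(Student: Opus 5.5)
The plan is to prove \eqref{eq:weighted-tensorization} directly, by exhibiting for each block $m$ a competitor $\nu^{m}$ in the variational formula \eqref{eq:gibbs-variational}. The competitors are built by moving only block $m$ partway along the optimal transport from $\mu$ to $\pi$. The step length in block $m$ is $s_{m}:=\lambda/\theta_{m}$ with $\lambda:=\lambda^{\star}_{\theta}$; this satisfies $s_{m}\le1$ because $\lambda^{\star}_{\theta}\le\theta_{\min}$.

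\textbf{Setup.} Since $\mathsf{KL}(\mu\mid\pi)<\infty$, the measure $\mu$ is absolutely continuous. Since $\lambda>0$, the potential $U$ is strongly convex, so Talagrand's inequality gives $\mu\in\mathcal{P}_{2}$. Let $T=\nabla\varphi$ be the Brenier map pushing $\mu$ to $\pi$. Its Alexandrov derivative $\nabla T$ is symmetric positive semidefinite $\mu$-a.e. Define
\[
S_{m}(x):=\bigl(x_{m}+s_{m}(T_{m}(x)-x_{m}),\,x_{-m}\bigr),\qquad \nu^{m}:=(S_{m})_{\#}\mu .
\]
Then $\nu^{m}_{-m}=\mu_{-m}$, so $\mathsf{KL}(\mu_{-m}\mid\pi_{-m})\le\mathsf{KL}(\nu^{m}\mid\pi)$. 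Write $\mathsf{KL}(\nu\mid\pi)=\int U\,\mathrm{d}\nu+\int\log\nu\,\mathrm{d}\nu$. It then suffices to show that the $\theta$-average of each of the three pieces ($U_{0}$, the separable terms, and the entropy) is at most $(1-\lambda)$ times its value under $\mu$ plus $\lambda$ times its value under $\pi$. Summing these and using $\mathsf{KL}(\pi\mid\pi)=0$ then gives \eqref{eq:weighted-tensorization}.

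\textbf{Potential terms.} Fix $x$ and write $y=T(x)$.
\begin{itemize}
\item For $U_{0}$, apply block smoothness \eqref{eq:block-smooth} to each $S_{m}(x)$ and average with weights $\theta_{m}$:
\[
\sum_{m}\theta_{m}U_{0}(S_{m}x)\le U_{0}(x)+\lambda\langle\nabla U_{0}(x),y-x\rangle+\tfrac{\lambda^{2}}{2}\|y-x\|^{2}_{L,\theta},
\]
using $\theta_{m}s_{m}=\lambda$ and $\theta_{m}s_{m}^{2}L_{m}=\lambda^{2}L_{m}/\theta_{m}$. Now bound the linear term with \eqref{eq:selection-convexity}: $\lambda\langle\nabla U_{0}(x),y-x\rangle\le\lambda\bigl(U_{0}(y)-U_{0}(x)\bigr)-\tfrac{\lambda^{2}}{2}\|y-x\|^{2}_{L,\theta}$. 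The quadratic terms cancel exactly, leaving $(1-\lambda)U_{0}(x)+\lambda U_{0}(y)$. This exact cancellation is where the weighted norm $\|\cdot\|_{L,\theta}$ is forced.
\item For the separable part, block $j$ is moved only when $m=j$. Hence
\[
\sum_{m}\theta_{m}\sum_{j}U_{j}\bigl((S_{m}x)_{j}\bigr)=\sum_{j}\Bigl[(1-\theta_{j})U_{j}(x_{j})+\theta_{j}U_{j}\bigl(x_{j}+s_{j}(y_{j}-x_{j})\bigr)\Bigr].
\]
Convexity of $U_{j}$ bounds this by $\sum_{j}\bigl[(1-\lambda)U_{j}(x_{j})+\lambda U_{j}(y_{j})\bigr]$.
\end{itemize}
Integrating both bounds against $\mu$, and using $T_{\#}\mu=\pi$, gives the required $(1-\lambda,\lambda)$ bound for $\int U\,\mathrm{d}\nu^{m}$.

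\textbf{Entropy term.} This is where I expect the main difficulty. The Jacobian of $S_{m}$ is block triangular, with determinant $\det\bigl((1-s_{m})I+s_{m}A_{m}(x)\bigr)$, where $A_{m}:=\partial_{m}T_{m}$ is the PSD diagonal block of $\nabla T$. The change of variables formula gives
\[
\int\log\nu^{m}\,\mathrm{d}\nu^{m}=\int\log\mu\,\mathrm{d}\mu-\mathbf{E}_{\mu}\log\det\bigl((1-s_{m})I+s_{m}A_{m}\bigr).
\]
Concavity of $\log\det$ then bounds this by $\int\log\mu\,\mathrm{d}\mu-s_{m}\mathbf{E}_{\mu}\log\det A_{m}$. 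Averaging with weights $\theta_{m}$ produces the term $-\lambda\,\mathbf{E}_{\mu}\sum_{m}\log\det A_{m}$. Fischer's inequality gives $\det\nabla T\le\prod_{m}\det A_{m}$, so this is at most $-\lambda\,\mathbf{E}_{\mu}\log\det\nabla T$. The Monge--Ampère identity $\log\mu=\log\pi\circ T+\log\det\nabla T$ then gives exactly $(1-\lambda)\int\log\mu\,\mathrm{d}\mu+\lambda\int\log\pi\,\mathrm{d}\pi$.

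The delicate points are all in justifying these manipulations for a merely Alexandrov-differentiable Brenier map:
\begin{itemize}
\item $\mu$-a.e.\ injectivity of $S_{m}$;
\item validity of the change of variables formula as an inequality in the right direction (the McCann-type argument used for displacement convexity);
\item integrability of the terms $\log\det$.
\end{itemize}
If these prove troublesome, I would first prove the estimate for smooth, compactly perturbed $\mu$ and then pass to the limit by lower semicontinuity of $\mathsf{KL}$. I expect this is what the paper's separate interpolation section handles.

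\textbf{Consequences.} The chain rule gives $\mathsf{KL}(\mu\mid\pi)=\mathsf{KL}(\mu_{-m}\mid\pi_{-m})+\mathbf{E}_{\mu_{-m}}\mathsf{KL}\bigl(\mu(\cdot\mid X_{-m})\mid\pi(\cdot\mid X_{-m})\bigr)$. Averaging this over $m$ with weights $\theta_{m}$ shows that \eqref{eq:weighted-tensorization} and \eqref{eq:weighted-conditional} are equivalent. Finally, $\mu P^{\theta}=\sum_{m}\theta_{m}\,\mu P_{m}$, so convexity of $\mathsf{KL}(\cdot\mid\pi)$ and \eqref{eq:gibbs-variational} give $\mathsf{KL}(\mu P^{\theta}\mid\pi)\le\sum_{m}\theta_{m}\mathsf{KL}(\mu_{-m}\mid\pi_{-m})$. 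Combined with \eqref{eq:weighted-tensorization}, this is \eqref{eq:gibbs-contraction}.
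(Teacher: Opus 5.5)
Your proof is correct. Its overall structure matches the paper's: you move only block $m$ a fraction $\lambda/\theta_{m}$ of the way along a transport from $\mu$ to $\pi$, use the weighted potential interpolation, compare with \eqref{eq:gibbs-variational}, and then derive \eqref{eq:weighted-conditional} and \eqref{eq:gibbs-contraction}. The genuine difference is in the entropy step.

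\textbf{The paper's route.} It takes $T$ to be the increasing Knothe--Rosenblatt map, so $\mathrm{D}T$ is lower triangular. Each diagonal block $\mathbf{A}_{m}$ is then triangular with positive diagonal. The scalar inequality $\log(1-t+ta)\geq t\log a$ suffices, and $\det\mathrm{D}T=\prod_{m}\det\mathbf{A}_{m}$ holds with equality. This keeps the algebra elementary. The cost is that the Knothe--Rosenblatt map is only as regular as the conditional distribution functions. The paper therefore assumes a $C^{1}$ diffeomorphism and defers to the truncation and regularisation of \cite[Section~5.4]{ascolani2026entropy}.

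\textbf{Your route.} You use the Brenier map and replace that identity with two matrix inequalities: concavity of $\log\det$ on the positive semidefinite cone, and Fischer's inequality $\det\nabla T\leq\prod_{m}\det A_{m}$. Both point in the favourable direction. The map is canonical and exists under minimal assumptions, and its gradient structure settles the delicate points you list:
\begin{itemize}
\item For fixed $x_{-m}$, the map $x_{m}\mapsto(1-s_{m})x_{m}+s_{m}T_{m}(x)$ is the gradient of the strongly convex function $\frac{1-s_{m}}{2}\|x_{m}\|^{2}+s_{m}\varphi(x_{m},x_{-m})$ when $s_{m}<1$. It is therefore injective.
\item McCann's Alexandrov-sense Monge--Amp\`ere change of variables then applies to each conditional $\mu(\cdot\mid x_{-m})$.
\item The chain rule for entropy, together with $\nu^{m}_{-m}=\mu_{-m}$, reassembles $\mathcal{H}(\nu^{m})$.
\item The endpoint $s_{m}=1$, which can occur when $\lambda^{\star}_{\theta}=\theta_{m}$, is avoided by running the argument with $\lambda<\lambda^{\star}_{\theta}$ and letting $\lambda\uparrow\lambda^{\star}_{\theta}$. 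This works because the left side of \eqref{eq:weighted-tensorization} does not depend on $\lambda$.
\end{itemize}
This may let you bypass the separate regularisation step that the paper's route needs.
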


Iterating \eqref{eq:gibbs-contraction} gives 
\begin{equation}
\mathsf{KL}\left(\mu\left(P^{\theta}\right)^{n}\mid\pi\right)\leq\left(1-\lambda^{\star}_{\theta}\right)^{n}\mathsf{KL}\left(\mu\mid\pi\right).\label{eq:gibbs-iterate}
\end{equation}

The corresponding CAVI result contracts at precisely the same rate.
\begin{thm}[Weighted convergence of random-scan CAVI]
\label{thm:cavi} Suppose that Assumption~A holds and $\lambda^{\star}_{\theta}>0$.
If $q\in\mathcal{P}^{\otimes M}_{2}\left(\mathcal{X}\right)$ and
$F\left(q\right)<+\infty$, then 
\begin{equation}
\sum^{M}_{m=1}\theta_{m}\left[F\left(\mathcal{C}_{m}q\right)-F\left(q^{\star}\right)\right]\leq\left(1-\lambda^{\star}_{\theta}\right)\left[F\left(q\right)-F\left(q^{\star}\right)\right].\label{eq:cavi-one-step}
\end{equation}
Consequently, if $F\left(q^{\left(0\right)}\right)<+\infty$, then
\begin{equation}
\begin{aligned} & \mathbf{E}\left[F\left(q^{\left(n+1\right)}\right)-F\left(q^{\star}\right)\,\middle|\,q^{\left(n\right)}\right]\\
 & \hspace{3cm}\leq\left(1-\lambda^{\star}_{\theta}\right)\left[F\left(q^{\left(n\right)}\right)-F\left(q^{\star}\right)\right],
\end{aligned}
\label{eq:cavi-conditional}
\end{equation}
and 
\begin{equation}
\mathbf{E}\left[F\left(q^{\left(n\right)}\right)\right]-F\left(q^{\star}\right)\leq\left(1-\lambda^{\star}_{\theta}\right)^{n}\left[F\left(q^{\left(0\right)}\right)-F\left(q^{\star}\right)\right].\label{eq:cavi-iterate}
\end{equation}
\end{thm}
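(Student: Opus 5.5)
We plan to prove the one-step inequality \eqref{eq:cavi-one-step} and then deduce the other two statements. Given \eqref{eq:cavi-one-step}, the conditional bound \eqref{eq:cavi-conditional} follows because $I_{n+1}$ is independent of $q^{(n)}$ and equals $m$ with probability $\theta_{m}$. Taking expectations and iterating then gives \eqref{eq:cavi-iterate}. For this we need $F(q^{(n)})<+\infty$ almost surely, which holds since each $\mathcal{C}_{m}$ does not increase $F$.

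The proof of \eqref{eq:cavi-one-step} uses the same strategy as for the randomised coordinate descent, carried out in Wasserstein space with the interpolation estimate of Section~\ref{subsec:interpolation}. Write $q=\otimes_{m}q_{m}$ and $q^{\star}=\otimes_{m}q^{\star}_{m}$. Let $T_{m}$ be the optimal transport map from $q_{m}$ to $q^{\star}_{m}$; it exists because the $q_{m}$ have densities when $F(q)<\infty$. For each $m$ and each $t\in[0,1]$, we compare against the product measure $\nu^{m,t}:=\bigl((1-t)\mathrm{id}+tT_{m}\bigr)_{\#}q_{m}\otimes q_{-m}$. This moves only block $m$ along its geodesic. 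Minimality of $\mathcal{C}_{m}q$ gives $F(\mathcal{C}_{m}q)\leq F(\nu^{m,t})$.

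Next we split $F(q)=\int U_{0}\,\mathrm{d}q+\sum_{m}\bigl[\int U_{m}\,\mathrm{d}q_{m}-\mathsf{H}(q_{m})\bigr]$, where $\mathsf{H}$ is differential entropy, and bound each part.

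\emph{Separable part.} This part changes only in block $m$. Along the geodesic it is displacement convex, so it is at most $(1-t)$ times its value at $q_{m}$ plus $t$ times its value at $q^{\star}_{m}$.

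\emph{Coupled part.} Block smoothness \eqref{eq:block-smooth} applied pointwise under the coupling $x\mapsto(x,T(x))$, with $T=(T_{1},\ldots,T_{M})$, gives
\[
\int U_{0}\,\mathrm{d}\nu^{m,t}\leq\int U_{0}\,\mathrm{d}q+t\int\langle\nabla_{m}U_{0}(x),T_{m}(x_{m})-x_{m}\rangle\,q(\mathrm{d}x)+\tfrac{t^{2}L_{m}}{2}W_{2}^{2}(q_{m},q^{\star}_{m}).
\]

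We then choose $t=t_{m}$ depending on the block so that the weights $\theta_{m}$ combine correctly. The natural choice is the same $t$ for all blocks, namely $t=\lambda:=\lambda^{\star}_{\theta}$. Multiply the block-$m$ inequality by $\theta_{m}$ and sum. Each separable term carries weight $\theta_{m}t$; these weights are unequal unless we rescale, so instead we take $t_{m}=\lambda/\theta_{m}\leq1$. This is admissible because $\lambda\leq\theta_{\min}$.

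With this choice, $\sum_{m}\theta_{m}t_{m}\times(\text{separable change in block }m)=\lambda\times(\text{total separable change})$. The linear $U_{0}$ terms sum to $\lambda\int\langle\nabla U_{0}(x),T(x)-x\rangle\,\mathrm{d}q$. The quadratic terms sum to
\[
\sum_{m}\theta_{m}\frac{\lambda^{2}L_{m}}{2\theta_{m}^{2}}\|T_{m}-\mathrm{id}\|^{2}_{L^{2}(q_{m})}=\frac{\lambda^{2}}{2}\int\|T(x)-x\|^{2}_{L,\theta}\,q(\mathrm{d}x).
\]
Now apply \eqref{eq:selection-convexity} pointwise with $y=T(x)$:
\[
\int\langle\nabla U_{0},T-\mathrm{id}\rangle\,\mathrm{d}q\leq\int U_{0}\,\mathrm{d}q^{\star}-\int U_{0}\,\mathrm{d}q-\frac{\lambda}{2}\int\|T-\mathrm{id}\|^{2}_{L,\theta}\,\mathrm{d}q.
\]
After multiplying by $\lambda$, the quadratic term $-\lambda^{2}/2$ exactly cancels the $+\lambda^{2}/2$ term from smoothness. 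Collecting everything gives $\sum_{m}\theta_{m}F(\mathcal{C}_{m}q)\leq(1-\lambda)F(q)+\lambda F(q^{\star})$, which is \eqref{eq:cavi-one-step}.

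The main obstacle is rigour rather than algebra. We must justify the pointwise use of \eqref{eq:selection-convexity} and \eqref{eq:block-smooth} under the transport coupling, which needs integrability of $\nabla U_{0}$ against $q$ and finite second moments. We must also justify displacement convexity of $-\mathsf{H}+\int U_{m}$ along the one-block geodesic, and the existence of $T_{m}$. If $q_{m}$ has no density, we would replace maps by optimal couplings and use generalized geodesics. This is presumably what the weighted interpolation lemma of Section~\ref{subsec:interpolation} packages, and we would invoke it directly.
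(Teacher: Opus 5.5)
Your proposal is correct and follows essentially the same route as the paper: the same one-block geodesic comparison measures with $t_m=\lambda^{\star}_{\theta}/\theta_m$, the same grouping of $U_m$ with negative entropy into a displacement-convex functional, and the same cancellation of the $\lambda^{2}/2$ quadratic terms between block smoothness and selection-adapted convexity. The only difference is order: you integrate the smoothness and convexity inequalities separately and then combine them, whereas the paper combines them pointwise in Lemma~\ref{lem:potential} before integrating, which avoids the separate integrability of $\langle\nabla U_0, T-\mathrm{id}\rangle$ that you flag as an obstacle.
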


\subsection{Uniform scans and the naive comparison}

Let $\mathbf{D}:=\mathsf{diag}\left(L_{1}\mathbf{I}_{d_{1}},\ldots,L_{M}\mathbf{I}_{d_{M}}\right)$,
and let $\lambda^{\star}$ be the largest constant such that $\nabla^{2}U_{0}\left(x\right)\succeq\lambda^{\star}\mathbf{D}$
for every $x$, with the equivalent first-order definition when $U_{0}$
is only $C^{1}$.
\begin{prop}[Recovery and dominance]
\label{prop:comparison} For the uniform scan $\theta_{m}=1/M$,
it holds that $\lambda^{\star}_{\mathrm{unif}}=\frac{\lambda^{\star}}{M}$.
For every scan distribution $\theta$, it holds that $\lambda^{\star}_{\theta}\geq\theta_{\min}\lambda^{\star}=M\theta_{\min}\lambda^{\star}_{\mathrm{unif}}$.
\end{prop}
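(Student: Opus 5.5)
The plan is to work directly with the first-order definition \eqref{eq:selection-convexity} and compare the quadratic forms $\|\cdot\|^2_{L,\theta}$ and $\|v\|^2_{\mathbf{D}}:=\sum_m L_m\|v_m\|^2$. For any $\lambda\ge 0$, the inequality
\[
U_0(y)\ge U_0(x)+\langle\nabla U_0(x),y-x\rangle+\tfrac{\lambda}{2}\|y-x\|^2_{\mathbf{D}}
\]
holding for all $x,y$ defines $\lambda^\star$. Both constants are suprema of $\lambda$ for which a family of inequalities holds. Each such set of admissible $\lambda$ is closed, since the inequalities are non-strict and linear in $\lambda$. So the statement reduces to comparing these sets after rescaling.

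For the uniform scan $\theta_m=1/M$ we have $\|v\|^2_{L,\mathrm{unif}}=\sum_m M L_m\|v_m\|^2=M\|v\|^2_{\mathbf{D}}$. Hence $\lambda$ is admissible for \eqref{eq:selection-convexity} exactly when $M\lambda$ is admissible for the $\mathbf{D}$-inequality. Taking largest admissible values gives $M\lambda^\star_{\mathrm{unif}}=\lambda^\star$, which is the first claim. In the $C^2$ case this is simply $\mathbf{D}_{\mathrm{unif}}=M\mathbf{D}$, and the rest is immediate.

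For general $\theta$, note that $\theta_m\ge\theta_{\min}$ gives $L_m/\theta_m\le L_m/\theta_{\min}$ blockwise. Hence $\|v\|^2_{L,\theta}\le\theta_{\min}^{-1}\|v\|^2_{\mathbf{D}}$, and therefore
\[
\tfrac{\theta_{\min}\lambda^\star}{2}\|y-x\|^2_{L,\theta}\le\tfrac{\lambda^\star}{2}\|y-x\|^2_{\mathbf{D}}.
\]
Combined with the defining inequality for $\lambda^\star$, this shows that $\theta_{\min}\lambda^\star$ is admissible in \eqref{eq:selection-convexity}. Thus $\lambda^\star_\theta\ge\theta_{\min}\lambda^\star$. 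The final equality $\theta_{\min}\lambda^\star=M\theta_{\min}\lambda^\star_{\mathrm{unif}}$ follows from the first part.

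There is no real obstacle here. The only point needing care is that the suprema are attained, so that ``largest'' is well defined and the admissible $\lambda^\star$ may be used directly. This holds because each admissible set is an interval $[0,\bar\lambda]$: it contains $0$ by convexity of $U_0$, it is closed under decreasing $\lambda$ since the norms are nonnegative, and it is closed by the limit argument above. It is also bounded, by the block-smoothness remark preceding the statement.
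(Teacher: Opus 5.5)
Your proof is correct and takes essentially the paper's approach: the paper also uses $\mathbf{D}_{\mathrm{unif}}=M\mathbf{D}$ and $\mathbf{D}_{\theta}\preceq\theta_{\min}^{-1}\mathbf{D}$, but phrases them as Hessian inequalities. You instead work with the first-order inequality, which covers the $C^{1}$ case directly, and you also check that the suprema are attained; neither changes the substance.
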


\begin{rem}
Theorems~\ref{thm:gibbs} and \ref{thm:cavi} are thus never weaker
than the rate obtained by combining the corresponding uniform-scan
theorem with the elementary comparison based on $\theta_{\min}$.
\end{rem}

\begin{proof}
For the uniform scan, $\mathbf{D}_{\mathrm{unif}}=M\mathbf{D}$, proving
the first claim. Moreover, $\mathbf{D}_{\theta}\preceq\frac{1}{\theta_{\min}}\mathbf{D}$,
and hence $\nabla^{2}U_{0}\succeq\lambda^{\star}\mathbf{D}$ implies
$\nabla^{2}U_{0}\succeq\theta_{\min}\lambda^{\star}\mathbf{D}_{\theta}$,
which proves the second claim.
\end{proof}

For comparison, the elementary comparison for Gibbs writes $P^{\theta}$
as a mixture containing the uniform kernel with weight $M\theta_{\min}$,
uses convexity of relative entropy together with the fact that every
$P_{m}$ is entropy non-increasing, and gives the contraction parameter
$\theta_{\min}\lambda^{\star}$. For CAVI, decomposing the weighted
average of $F\left(\mathcal{C}_{m}q\right)$ in the same way and using
$F\left(\mathcal{C}_{m}q\right)\leq F\left(q\right)$ again gives
$\theta_{\min}\lambda^{\star}$, which is no larger than $\lambda^{\star}_{\theta}$. 

\section{Scan design and statistical examples}\label{sec:examples}

This section illustrates statistical settings in which non-uniform
scanning gives an appreciable improvement in the theoretical convergence
rates above. The point is to isolate mechanisms for this improvement
that can be recognised (or not) in other problems. We first record
two general results which will aid in our study of Gaussian targets,
and then use them to study some linear-Gaussian hierarchical models
in detail. A separate smoothness-based example shows that substantial
gains can also be certified without knowing the full Hessian. Throughout,
an \emph{improvement} will denote the ratio of the contraction parameters
$\lambda^{\star}_{\theta}/\lambda^{\star}_{\mathrm{unif}}$. When
these parameters are small, this agrees to first order with the ratio
of the corresponding exponential rates $-\log\left(1-\lambda^{\star}_{\theta}\right)$
and $-\log\left(1-\lambda^{\star}_{\mathrm{unif}}\right)$.

\subsection{Gaussian scan optimisation}\label{subsec:gaussian-design}

Consider the Gaussian potential $U\left(x\right)=\frac{1}{2}x^{\mathsf{T}}\mathbf{Q}x-\ell^{\mathsf{T}}x+c$,
where $\mathbf{Q}$ is symmetric positive definite. Write $\mathbf{D}_{Q}:=\mathsf{diag}\left(\mathbf{Q}_{11},\ldots,\mathbf{Q}_{MM}\right)$
and $\mathbf{R}:=\mathbf{D}^{-1/2}_{Q}\mathbf{Q}\mathbf{D}^{-1/2}_{Q}$,
so that $\mathbf{R}$ is the block-normalised precision matrix, with
identity diagonal blocks.
\begin{prop}[Gaussian scan design after block whitening]
\label{prop:gaussian-design} Take $U_{0}=U$ and $U_{m}=0$ for
every $m$, and express the potential in the block-whitened coordinates
$z=\mathbf{D}^{1/2}_{Q}x$. In these coordinates, put $\mathbf{G}_{\theta}=\mathsf{diag}\left(\theta_{1}\mathbf{I}_{d_{1}},\ldots,\theta_{M}\mathbf{I}_{d_{M}}\right)$.
Then
\begin{equation}
\lambda^{\star}_{\theta}=\lambda_{\min}\left(\mathbf{G}^{1/2}_{\theta}\mathbf{R}\mathbf{G}^{1/2}_{\theta}\right).\label{eq:gaussian-design-lambda}
\end{equation}
Moreover, maximising $\lambda^{\star}_{\theta}$ over scan distributions
is equivalent to the convex optimisation problem
\begin{equation}
\left(\lambda^{\star}_{\mathrm{opt}}\right)^{-1}=\min_{a_{1},\ldots,a_{M}>0}\left\{ \sum^{M}_{m=1}a^{-1}_{m}:\ \mathbf{R}\succeq\mathsf{diag}\left(a_{1}\mathbf{I}_{d_{1}},\ldots,a_{M}\mathbf{I}_{d_{M}}\right)\right\} .\label{eq:gaussian-design-convex}
\end{equation}
If $a^{\star}$ solves \eqref{eq:gaussian-design-convex}, then an
optimal scan is given by $\theta^{\star}_{m}=\lambda^{\star}_{\mathrm{opt}}/a^{\star}_{m}$.
\end{prop}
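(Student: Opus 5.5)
First I will identify the smoothness constants in the whitened coordinates, then compute $\lambda^{\star}_{\theta}$ from the Hessian characterisation in the Definition. Since $U=U_{0}$ is quadratic, block-$L_{m}$-smoothness \eqref{eq:block-smooth} holds exactly with $L_{m}=\lambda_{\max}(\mathbf{Q}_{mm})$. The intended reading, however, is that block whitening normalises each diagonal block to the identity. I will therefore interpret $\mathbf{D}_{Q}^{1/2}$ as the block-diagonal square root of the diagonal blocks $\mathbf{Q}_{mm}$. This is a linear change of variables acting within each block, so it preserves the product structure and both algorithms: Gibbs and CAVI are equivariant under blockwise invertible affine maps, and KL is invariant under them.

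In the coordinates $z$ the Hessian is $\mathbf{R}$, whose diagonal blocks are identities. Hence $L_{m}=1$ is valid, and it is sharp. We then get $\mathbf{D}_{\theta}=\mathsf{diag}(\theta_{m}^{-1}\mathbf{I}_{d_{m}})$ and $\mathbf{G}_{\theta}=\mathbf{D}_{\theta}^{-1}$ as stated. By the $C^{2}$ form of the Definition, $\lambda^{\star}_{\theta}$ is the largest $\lambda$ with $\mathbf{R}\succeq\lambda\mathbf{G}_{\theta}^{-1}$. Congruence by $\mathbf{G}_{\theta}^{1/2}$ turns this into $\mathbf{G}_{\theta}^{1/2}\mathbf{R}\mathbf{G}_{\theta}^{1/2}\succeq\lambda\mathbf{I}$, which gives \eqref{eq:gaussian-design-lambda}. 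Because the Hessian is constant, nothing more is needed.

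For the design problem, I will maximise $\lambda$ over pairs $(\theta,\lambda)$ in the simplex interior satisfying $\mathbf{R}\succeq\mathsf{diag}(\lambda/\theta_{m}\,\mathbf{I}_{d_{m}})$.
\begin{itemize}
\item Given a feasible pair with $\lambda>0$, set $a_{m}=\lambda/\theta_{m}$. These $a_{m}$ are feasible for \eqref{eq:gaussian-design-convex}, and $\sum_{m}a_{m}^{-1}=\lambda^{-1}\sum_{m}\theta_{m}=\lambda^{-1}$.
\item Conversely, given feasible $a$, set $\lambda=(\sum_{m}a_{m}^{-1})^{-1}$ and $\theta_{m}=\lambda/a_{m}$. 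This $\theta$ lies in the open simplex and satisfies $\lambda/\theta_{m}=a_{m}$, so the constraint holds.
\end{itemize}
Thus $\sup_{\theta}\lambda^{\star}_{\theta}$ equals the reciprocal of the infimum of $\sum_{m}a_{m}^{-1}$. Optimal solutions correspond to each other under $\theta^{\star}_{m}=\lambda^{\star}_{\mathrm{opt}}/a^{\star}_{m}$.

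Convexity of \eqref{eq:gaussian-design-convex} is immediate: the objective $a\mapsto\sum_{m}a_{m}^{-1}$ is convex on the positive orthant, and the constraint is a linear matrix inequality in $a$.

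The main obstacle is attainment of the minimum, which is needed for the phrase ``if $a^{\star}$ solves''. I will argue it as follows.
\begin{itemize}
\item The feasible set is nonempty, since $a_{m}=\lambda_{\min}(\mathbf{R})>0$ works.
\item Feasibility forces $a_{m}\le1$, by comparing diagonal blocks of $\mathbf{R}$, whose diagonal blocks are identities.
\item The objective blows up as any $a_{m}\to0$.
\item Therefore a minimising sequence stays in a compact subset of $(0,1]^{M}$. The constraint set is closed, so a minimiser exists.
\end{itemize}

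A secondary point needs checking: the block-whitening must not alter the value of $\lambda^{\star}_{\theta}$ relevant to the theorems. Since the paper states that $\lambda^{\star}_{\theta}$ is relative to the chosen coordinates, it suffices to note that the theorems apply verbatim in the $z$ coordinates.
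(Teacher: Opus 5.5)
Your proof is correct and follows the paper's argument. Blockwise invariance lets you take $L_{m}=1$ in the whitened coordinates, congruence by $\mathbf{G}_{\theta}^{1/2}$ gives the eigenvalue formula, and the substitution $a_{m}=\lambda/\theta_{m}$ together with its inverse $\theta_{m}=\lambda/a_{m}$ gives the equivalence with the convex program; your compactness argument for attainment of the minimum and your explicit LMI convexity check are details the paper leaves implicit.
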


\begin{proof}
Both Gibbs sampling and the mean-field variational inference problem
are invariant under invertible changes of coordinates within each
block. In the $z$-coordinates, the Hessian is $\mathbf{R}$ and each
block smoothness constant may be taken to be one. Hence $\mathbf{D}_{\theta}=\mathbf{G}^{-1}_{\theta}$,
and the largest $\lambda$ for which $\mathbf{R}\succeq\lambda\mathbf{G}^{-1}_{\theta}$
is the smallest eigenvalue in \eqref{eq:gaussian-design-lambda}.
For a feasible pair $\left(\lambda,\theta\right)$, set $a_{m}=\lambda/\theta_{m}$.
The matrix inequality becomes the constraint in \eqref{eq:gaussian-design-convex},
while $\sum_{m}\theta_{m}=1$ gives $\lambda^{-1}=\sum_{m}a^{-1}_{m}$.
Conversely, any feasible $a$ defines a scan by $\theta_{m}=\lambda/a_{m}$
with this value of $\lambda$; the equivalence follows.
\end{proof}

While this convex optimisation problem need not admit a closed-form
solution, it makes the design problem explicit and relatively computationally
tractable. It also gives a useful interpretation of the optimal probabilities.
If the smallest eigenvalue in \eqref{eq:gaussian-design-lambda} is
simple at an interior optimum, with unit eigenvector $v=\left(v_{1},\ldots,v_{M}\right)$,
then first-order optimality gives $\theta^{\star}_{m}=\left\Vert v_{m}\right\Vert ^{2}$.
The scan hence allocates probability according to the blockwise mass
of its own slowest mode. The relation is implicit because that mode
itself depends on $\theta$; when the bottom eigenvalue is multiple,
the optimum instead balances the active slow directions.

We now include a short lemma which simplifies the treatment of jointly
Gaussian models for which the blocks enjoy certain structural symmetries,
as common for hierarchical models in Bayesian statistics.
\begin{lem}[Symmetrisation of the scan]
\label{lem:orbit-averaging} Let a finite group $\Gamma$ permute
equal-dimensional blocks, and let $\mathbf{P}_{g}$ be the orthogonal
block-permutation matrix induced by $g\in\Gamma$. Suppose that $\mathbf{P}^{\mathsf{T}}_{g}\mathbf{R}\mathbf{P}_{g}=\mathbf{R}$
for every $g\in\Gamma$. If $\bar{\theta}$ is obtained by averaging
a scan $\theta$ over $\Gamma$, then $\lambda^{\star}_{\bar{\theta}}\geq\lambda^{\star}_{\theta}$.
Consequently, an optimal scan may be taken to be constant on each
orbit of the group action.
\end{lem}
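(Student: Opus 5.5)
The plan is to combine the Gaussian characterisation of Proposition~\ref{prop:gaussian-design} with concavity and invariance of $\theta\mapsto\lambda^{\star}_{\theta}$. In the block-whitened coordinates, $\lambda^{\star}_{\theta}$ is the largest $\lambda\geq0$ with $\mathbf{R}\succeq\lambda\,\mathbf{G}^{-1}_{\theta}$, where $\mathbf{G}^{-1}_{\theta}=\mathsf{diag}\left(\theta^{-1}_{1}\mathbf{I}_{d_{1}},\ldots,\theta^{-1}_{M}\mathbf{I}_{d_{M}}\right)$. So the first step is to show that the map $\theta\mapsto\lambda^{\star}_{\theta}$ is concave on the open simplex, and the second is to show it is invariant under the group action. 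Jensen's inequality then finishes the argument.

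\textbf{Invariance.} For $g\in\Gamma$, write $g\theta$ for the permuted scan, with $(g\theta)_{g(m)}=\theta_{m}$ (up to the convention for inverses). Because $g$ permutes blocks of equal dimension, $\mathbf{P}^{\mathsf{T}}_{g}\mathbf{G}_{\theta}\mathbf{P}_{g}=\mathbf{G}_{g\theta}$ for the appropriate convention. Combined with $\mathbf{P}^{\mathsf{T}}_{g}\mathbf{R}\mathbf{P}_{g}=\mathbf{R}$, conjugating by the orthogonal matrix $\mathbf{P}_{g}$ transforms $\mathbf{G}^{1/2}_{\theta}\mathbf{R}\mathbf{G}^{1/2}_{\theta}$ into $\mathbf{G}^{1/2}_{g\theta}\mathbf{R}\mathbf{G}^{1/2}_{g\theta}$. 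The two matrices therefore share their smallest eigenvalue, and \eqref{eq:gaussian-design-lambda} gives $\lambda^{\star}_{g\theta}=\lambda^{\star}_{\theta}$.

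\textbf{Concavity.} The cleanest route is through the homogeneous reformulation used in the proof of Proposition~\ref{prop:gaussian-design}. Set $b_{m}:=\lambda/\theta_{m}$, so that feasibility reads $\mathbf{R}\succeq\mathsf{diag}(b_{m}\mathbf{I}_{d_{m}})$, and $\lambda$ is the largest value for which $\mathsf{diag}(\lambda\theta^{-1}_{m}\mathbf{I})\preceq\mathbf{R}$.

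It is more direct, though, to use the variational formula
\[
\lambda^{\star}_{\theta}=\inf_{v\neq0}\frac{v^{\mathsf{T}}\mathbf{R}v}{\sum_{m}\theta^{-1}_{m}\|v_{m}\|^{2}}.
\]
Substituting $v=\mathbf{G}^{1/2}_{\theta}w$ turns this into
\[
\lambda^{\star}_{\theta}=\inf_{\|w\|=1}\sum_{m,k}\theta^{1/2}_{m}\theta^{1/2}_{k}\,w^{\mathsf{T}}_{m}\mathbf{R}_{mk}w_{k},
\]
which is not obviously concave. So instead I fix $v$ and study $\theta\mapsto\bigl(\sum_{m}\|v_{m}\|^{2}/\theta_{m}\bigr)^{-1}$. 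This is a weighted harmonic-type mean of the $\theta_{m}$: it is positively homogeneous of degree one, and it is concave because $\theta\mapsto\sum_{m}c_{m}/\theta_{m}$ is convex and $1/h$ is concave for a convex, positive, $1$-homogeneous $h$. Alternatively, one can write it as $\inf\{\sum_{m}\theta_{m}t_{m}:\ \ldots\}$, an infimum of linear functions, via Cauchy--Schwarz: $\bigl(\sum c_{m}/\theta_{m}\bigr)^{-1}=\min_{s\in\Delta}\sum_{m}\theta_{m}s^{2}_{m}/c_{m}$. Multiplying by $v^{\mathsf{T}}\mathbf{R}v\geq0$ and taking the infimum over $v$ then expresses $\lambda^{\star}_{\theta}$ as an infimum of concave (indeed linear) functions of $\theta$, hence concave.

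\textbf{Conclusion.} With $\bar{\theta}=|\Gamma|^{-1}\sum_{g}g\theta$, concavity and invariance give
\[
\lambda^{\star}_{\bar{\theta}}\geq|\Gamma|^{-1}\sum_{g}\lambda^{\star}_{g\theta}=\lambda^{\star}_{\theta}.
\]
Applying this to an optimal scan yields an optimal scan constant on orbits. The main obstacle is the concavity step. I would verify the identity $\bigl(\sum_{m}c_{m}/\theta_{m}\bigr)^{-1}=\min_{s}\sum_{m}\theta_{m}s^{2}_{m}/c_{m}$, with the minimum over $s\geq0$, $\sum_{m} s_{m}=1$, attained at $s_{m}\propto c_{m}/\theta_{m}$. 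I would also treat the blocks with $c_{m}=\|v_{m}\|^{2}=0$ separately by dropping those terms.
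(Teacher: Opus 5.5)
Your proof is correct, but it takes a different route from the paper's.

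\textbf{The paper's route.} The paper never establishes concavity of $\theta\mapsto\lambda^{\star}_{\theta}$. It fixes a $\lambda$ that is feasible for $\theta$. It then uses $\mathbf{P}^{\mathsf{T}}_{g}\mathbf{R}\mathbf{P}_{g}=\mathbf{R}$ to see that the \emph{same} $\lambda$ is feasible for every permuted scan $\theta^{g}$. Averaging these matrix inequalities gives $\mathbf{R}\succeq\frac{\lambda}{\lvert\Gamma\rvert}\sum_{g}\mathbf{G}^{-1}_{\theta^{g}}$. Finally, convexity of $t\mapsto t^{-1}$, applied entrywise to these diagonal matrices, passes to $\mathbf{G}^{-1}_{\bar{\theta}}$. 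There, convexity of $1/t$ plays the role that concavity of the harmonic mean plays in your argument. The averaging step works only because the whole orbit shares one feasible $\lambda$. Averaging inequalities for unrelated scans with different feasible values does not in general give concavity.

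\textbf{What your route buys.} You separate invariance from concavity, and this proves more. You show that $\lambda^{\star}_{\theta}$ is concave on the whole simplex, so the design problem is a concave maximisation directly in $\theta$ and the optimal scans form a convex set. Your fixed-$v$ argument also carries over verbatim to \eqref{eq:selection-convexity} for non-Gaussian $U_{0}$. There the infimum runs over pairs $(x,y)$, and the Bregman numerator is nonnegative and independent of $\theta$. The paper's version is shorter and needs no variational formula.

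\textbf{A slip to fix.} The function $\sum_{m}c_{m}/\theta_{m}$ is homogeneous of degree $-1$, not $1$. The principle you quote, that $1/h$ is concave for convex, positive, $1$-homogeneous $h$, is false as stated; take $h(t)=t$ on $(0,\infty)$. The correct version concerns positive convex functions that are homogeneous of degree $-1$. Your Cauchy--Schwarz representation $\bigl(\sum_{m}c_{m}/\theta_{m}\bigr)^{-1}=\min_{s\in\Delta}\sum_{m}\theta_{m}s^{2}_{m}/c_{m}$ already settles concavity cleanly. You can therefore drop the homogeneity remark, along with the two abandoned reformulations (the $b_{m}$ substitution and $v=\mathbf{G}^{1/2}_{\theta}w$).
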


\begin{proof}
If $\mathbf{R}\succeq\lambda\mathbf{G}^{-1}_{\theta}$, then conjugating
by each $\mathbf{P}_{g}$ and averaging over the group gives that
\[
\mathbf{R}\succeq\frac{\lambda}{\lvert\Gamma\rvert}\sum_{g\in\Gamma}\mathbf{G}^{-1}_{\theta^{g}},
\]
where $\theta^{g}$ denotes the permuted scan. The convexity of $t\mapsto t^{-1}$
yields, block by block,
\[
\frac{1}{\lvert\Gamma\rvert}\sum_{g\in\Gamma}\mathbf{G}^{-1}_{\theta^{g}}\succeq\mathbf{G}^{-1}_{\bar{\theta}},
\]
and so $\lambda$ is also feasible for $\bar{\theta}$. The group
average is constant on every orbit by construction.
\end{proof}

\subsection{Localised difficulty}\label{subsec:localized}

The cleanest mechanism for a large gain is that the `slow' geometry
of the target distribution is confined to a small collection of blocks.
The following proposition isolates this mechanism.
\begin{prop}[A small hard subsystem]
\label{prop:localized-hardness} Suppose that the normalised Gaussian
precision decomposes as $\mathbf{R}=\mathbf{R}_{H}\oplus\mathbf{I}$,
where $\mathbf{R}_{H}$ contains $k$ blocks, $\lambda_{\min}\left(\mathbf{R}_{H}\right)=\varepsilon\in\left(0,1\right]$,
and $\mathbf{I}$ contains $M-k$ blocks which are each independent
and normalised. Assign probability
\[
a:=\frac{1}{k+\left(M-k\right)\varepsilon}
\]
to each `hard' block and probability $b:=\varepsilon a$ to each `easy'
block. Then
\begin{equation}
\lambda^{\star}_{\theta}=\frac{\varepsilon}{k+\left(M-k\right)\varepsilon}.\label{eq:localized-lambda}
\end{equation}
For the uniform scan, $\lambda^{\star}_{\mathrm{unif}}=\varepsilon/M$,
and hence
\begin{equation}
\frac{\lambda^{\star}_{\theta}}{\lambda^{\star}_{\mathrm{unif}}}=\frac{M}{k+\left(M-k\right)\varepsilon}\longrightarrow\frac{M}{k}\qquad\text{as }\varepsilon\downarrow0.\label{eq:localized-gain}
\end{equation}
In particular, along regimes in which $\varepsilon\downarrow0$ and
$k/M\to0$, the improvement over the uniform scan is unbounded.
\end{prop}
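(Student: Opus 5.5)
The plan is to work directly from the formula $\lambda^{\star}_{\theta}=\lambda_{\min}\bigl(\mathbf{G}^{1/2}_{\theta}\mathbf{R}\mathbf{G}^{1/2}_{\theta}\bigr)$ of Proposition~\ref{prop:gaussian-design}, in block-whitened coordinates. Two structural facts do most of the work. First, $\mathbf{R}=\mathbf{R}_{H}\oplus\mathbf{I}$ is block diagonal with respect to the hard/easy split. Second, the proposed scan is constant on each part: $\theta_m=a$ on hard blocks and $\theta_m=b=\varepsilon a$ on easy blocks. Hence $\mathbf{G}_{\theta}=a\mathbf{I}_{H}\oplus b\mathbf{I}_{E}$, which also splits along the same decomposition. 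Consequently
\[
\mathbf{G}^{1/2}_{\theta}\mathbf{R}\mathbf{G}^{1/2}_{\theta}=a\mathbf{R}_{H}\oplus b\mathbf{I},
\]
and its smallest eigenvalue is $\min\{a\varepsilon,\,b\}=a\varepsilon$, because $b=\varepsilon a$ by construction. This gives \eqref{eq:localized-lambda} once we note that $a\varepsilon=\varepsilon/(k+(M-k)\varepsilon)$. Before that, I will check that $\theta$ is a valid scan: $ka+(M-k)\varepsilon a=1$ holds by the definition of $a$, and all entries are positive since $\varepsilon>0$.

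For the uniform scan, $\mathbf{G}_{\mathrm{unif}}=M^{-1}\mathbf{I}$. So $\lambda^{\star}_{\mathrm{unif}}=M^{-1}\lambda_{\min}(\mathbf{R})=M^{-1}\min\{\varepsilon,1\}=\varepsilon/M$, using $\varepsilon\le 1$. Dividing gives the ratio in \eqref{eq:localized-gain}, and letting $\varepsilon\downarrow0$ gives the limit $M/k$.

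For the final sentence, write the ratio as $1/\bigl(k/M+(1-k/M)\varepsilon\bigr)$. Along any regime with $\varepsilon\to0$ and $k/M\to0$, the denominator tends to $0$, so the ratio is unbounded. One caveat: Proposition~\ref{prop:localized-hardness} is stated for a particular scan, not the optimal one, so no optimality argument is needed here. Optionally, one could remark that this scan is in fact optimal among scans constant on the two parts, by equating the two eigenvalue branches $a\varepsilon=b$, consistent with the balancing interpretation after Proposition~\ref{prop:gaussian-design}.

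There is no real obstacle. The only points needing care are the two block-diagonal splittings commuting (immediate, since $\mathbf{G}_{\theta}$ is scalar on each part) and the use of $\varepsilon\le1$ to identify $\lambda_{\min}(\mathbf{R})=\varepsilon$ in the uniform case.
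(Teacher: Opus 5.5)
Your proposal is correct and follows essentially the same route as the paper: the block-diagonal splitting gives $\mathbf{G}^{1/2}_{\theta}\mathbf{R}\mathbf{G}^{1/2}_{\theta}=a\mathbf{R}_{H}\oplus b\mathbf{I}$, Proposition~\ref{prop:gaussian-design} then yields $\min\{a\varepsilon,b\}=a\varepsilon$, and the uniform case reduces to $\lambda_{\min}(\mathbf{R})/M=\varepsilon/M$. Your explicit checks that $\theta$ is a valid scan and that $\varepsilon\le1$ is what identifies $\lambda_{\min}(\mathbf{R})=\varepsilon$ are details the paper leaves implicit.
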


\begin{proof}
For the displayed scan, $\mathbf{G}^{1/2}_{\theta}\mathbf{R}\mathbf{G}^{1/2}_{\theta}=a\mathbf{R}_{H}\oplus b\mathbf{I}$.
Proposition~\ref{prop:gaussian-design} therefore gives $\lambda^{\star}_{\theta}=\min\left\{ a\varepsilon,b\right\} $.
Since $a\varepsilon=b$ and $ka+\left(M-k\right)b=1$, this proves
\eqref{eq:localized-lambda}. Under the uniform scan, the weighted
normalised precision is $\mathbf{R}/M$, from which the remaining
claims follow.
\end{proof}

The gain is large because uniform scanning spends almost all of its
updates on blocks that do not participate in the nearly-singular direction
in the target, wasting effort on blocks which are already well-resolved
by the algorithm. The proposed scan instead balances the rate $a\varepsilon$
of the hard subsystem against the rate $b$ of an easy block. 

This geometry arises naturally in hierarchical regression when only
a few groups are strongly confounded with the global effects. Consider
the linear random effects model
\begin{equation}
y_{i}=\mathbf{X}_{i}\beta+\mathbf{Z}_{i}b_{i}+\varepsilon_{i},\qquad i=1,\ldots,G,\label{eq:random-slopes-model-example}
\end{equation}
with independent Gaussian priors $\beta\sim\mathcal{N}\left(0,\boldsymbol{\Sigma}_{\beta}\right)$
and $b_{i}\sim\mathcal{N}\left(0,\boldsymbol{\Sigma}_{b}\right)$,
and with errors having precision matrices $\mathbf{W}_{i}$. The posterior
precision blocks are
\begin{equation}
\begin{aligned}\mathbf{Q}_{\beta\beta} & =\boldsymbol{\Sigma}^{-1}_{\beta}+\sum^{G}_{i=1}\mathbf{X}^{\mathsf{T}}_{i}\mathbf{W}_{i}\mathbf{X}_{i},\\
\mathbf{Q}_{b_{i}b_{i}} & =\boldsymbol{\Sigma}^{-1}_{b}+\mathbf{Z}^{\mathsf{T}}_{i}\mathbf{W}_{i}\mathbf{Z}_{i},\\
\mathbf{Q}_{\beta b_{i}} & =\mathbf{X}^{\mathsf{T}}_{i}\mathbf{W}_{i}\mathbf{Z}_{i},\qquad\mathbf{Q}_{b_{i}b_{j}}=\mathbf{0}\quad\text{for }i\neq j.
\end{aligned}
\label{eq:random-slopes-precision-example}
\end{equation}
Define the normalised global-{}-local couplings $\mathbf{C}_{i}:=\mathbf{Q}^{-1/2}_{\beta\beta}\mathbf{Q}_{\beta b_{i}}\mathbf{Q}^{-1/2}_{b_{i}b_{i}}$.
\begin{cor}[A few confounded clusters]
\label{cor:few-confounded-clusters} Suppose, after reordering, that
$\mathbf{C}_{i}=\mathbf{0}$ for $k<i\leq G$. Let $\mathbf{R}_{H}$
be the normalised precision restricted to the global block and the
first $k$ random-effect blocks, and suppose that $\lambda_{\min}\left(\mathbf{R}_{H}\right)=\varepsilon\in\left(0,1\right]$.
Assign probability $a=\left[k+1+\left(G-k\right)\varepsilon\right]^{-1}$
to the global block and each of the first $k$ group blocks, and probability
$b=\varepsilon a$ to every remaining group block. Then
\[
\lambda^{\star}_{\theta}=\frac{\varepsilon}{k+1+\left(G-k\right)\varepsilon},\qquad\lambda^{\star}_{\mathrm{unif}}=\frac{\varepsilon}{G+1},
\]
and
\begin{equation}
\frac{\lambda^{\star}_{\theta}}{\lambda^{\star}_{\mathrm{unif}}}=\frac{G+1}{k+1+\left(G-k\right)\varepsilon}\longrightarrow\frac{G+1}{k+1}\qquad\text{as }\varepsilon\downarrow0.\label{eq:few-confounded-gain}
\end{equation}
\end{cor}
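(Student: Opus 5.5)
The plan is to reduce Corollary~\ref{cor:few-confounded-clusters} to Proposition~\ref{prop:localized-hardness}. That proposition is applied with $M=G+1$ blocks and with $k+1$ hard blocks: the global block $\beta$ together with the first $k$ random-effect blocks. We work in block-whitened coordinates $z=\mathbf{D}^{1/2}_{Q}x$, which Proposition~\ref{prop:gaussian-design} permits. There the normalised precision $\mathbf{R}$ has identity diagonal blocks. Its off-diagonal blocks are $\mathbf{R}_{\beta b_{i}}=\mathbf{Q}^{-1/2}_{\beta\beta}\mathbf{Q}_{\beta b_{i}}\mathbf{Q}^{-1/2}_{b_{i}b_{i}}=\mathbf{C}_{i}$ and its transposes, while $\mathbf{R}_{b_{i}b_{j}}=\mathbf{0}$ for $i\neq j$, by \eqref{eq:random-slopes-precision-example}.

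First, we verify the direct-sum structure. For $i>k$ we have $\mathbf{C}_{i}=\mathbf{0}$, so block $b_{i}$ has no coupling to $\beta$. It has no coupling to any other $b_{j}$ either. Its diagonal block is the identity. After permuting blocks (a relabelling that does not affect $\lambda^{\star}_{\theta}$), we therefore get $\mathbf{R}=\mathbf{R}_{H}\oplus\mathbf{I}$. Here $\mathbf{R}_{H}$ is the normalised precision on $(\beta,b_{1},\ldots,b_{k})$, which has $k+1$ blocks, and $\mathbf{I}$ covers the $G-k$ easy blocks. By hypothesis $\lambda_{\min}(\mathbf{R}_{H})=\varepsilon\in(0,1]$.

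Second, we substitute $M\mapsto G+1$ and $k\mapsto k+1$ into Proposition~\ref{prop:localized-hardness}. Then $M-k\mapsto G-k$, so $a=[k+1+(G-k)\varepsilon]^{-1}$ and $b=\varepsilon a$, which are exactly the probabilities prescribed in the corollary. The proposition then gives $\lambda^{\star}_{\theta}=\varepsilon/(k+1+(G-k)\varepsilon)$ and $\lambda^{\star}_{\mathrm{unif}}=\varepsilon/(G+1)$. Dividing these yields the ratio and its limit in \eqref{eq:few-confounded-gain}.

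The only point needing care is that the hypotheses of Proposition~\ref{prop:localized-hardness} are genuinely met. We need $\mathbf{Q}$ positive definite, which holds because the priors are proper and so the posterior precision is positive definite. We also need the diagonal blocks to be invertible, which follows from positive definiteness. Finally, the uniform-scan value relies on $\lambda_{\min}(\mathbf{R})=\min\{\varepsilon,1\}=\varepsilon$, which uses $\varepsilon\leq1$. Once these are checked, the statement follows from the earlier proposition by direct substitution, so I expect no real obstacle.
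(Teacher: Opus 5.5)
Your proposal is correct and follows the paper's proof. The paper also whitens each block, identifies the off-diagonal blocks of $\mathbf{R}$ as the $\mathbf{C}_{i}$ with no group--group coupling, deduces $\mathbf{R}=\mathbf{R}_{H}\oplus\mathbf{I}$, and applies Proposition~\ref{prop:localized-hardness} with $M=G+1$ and $k+1$ hard blocks; your additional checks (positive definiteness of $\mathbf{Q}$ and the use of $\varepsilon\leq1$ for the uniform-scan value) are sound and just make explicit what the paper leaves implicit.
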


\begin{proof}
Whitening each block by its diagonal precision gives identity diagonal
blocks and off-diagonal blocks $\mathbf{C}_{i}$ between the global
block and group $i$. There are no interactions between distinct group
blocks, whereby the stated condition gives $\mathbf{R}=\mathbf{R}_{H}\oplus\mathbf{I}$,
and Proposition~\ref{prop:localized-hardness} applies with $M=G+1$
and $k+1$ hard blocks.
\end{proof}

The condition $\mathbf{C}_{i}=\mathbf{0}$ says that the fixed- and
random-effect designs are orthogonal in the $\mathbf{W}_{i}$-geometry
for the easy groups. Small rather than zero values of $\left\Vert \mathbf{C}_{i}\right\Vert _{\mathrm{op}}$
perturb the displayed rates continuously, while singular values near
one identify groups with nearly confounded fixed and random effects.
The example hence describes a regime in which the number of \emph{statistically
difficult} groups, rather than the total number of groups, should
determine how the scan allocates most of its probability.

\subsection{Heterogeneous block smoothness}\label{subsec:smoothness-design}

The previous arguments exploit the full Gaussian precision. A complementary
mechanism is visible from examination of only the block smoothness
constants, and applies beyond the Gaussian setting.
\begin{prop}[Importance probabilities from smoothness]
\label{prop:smoothness-design} Suppose that $\nabla^{2}U_{0}\left(x\right)\succeq\mu\mathbf{I}_{d}$
for every $x$, and that $U_{0}$ is block-$L_{m}$-smooth. Define
$\theta^{L}_{m}:=L_{m}/\sum^{M}_{j=1}L_{j}$. Then
\begin{equation}
\lambda^{\star}_{\theta^{L}}\geq\frac{\mu}{\sum^{M}_{j=1}L_{j}}.\label{eq:smoothness-design}
\end{equation}
For the uniform scan, the same coarse information gives
\begin{equation}
\lambda^{\star}_{\mathrm{unif}}\geq\frac{\mu}{ML_{\max}},\qquad L_{\max}:=\max_{1\leq m\leq M}L_{m}.\label{eq:smoothness-uniform-certificate}
\end{equation}
Consequently, the ratio between these two computable lower bounds
is
\begin{equation}
\frac{ML_{\max}}{\sum^{M}_{m=1}L_{m}},\label{eq:smoothness-certificate-gain}
\end{equation}
which lies between $1$ and $M$, and can be arbitrarily close to
$M$.
\end{prop}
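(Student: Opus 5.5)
The plan is to verify the defining inequality \eqref{eq:selection-convexity} directly in each case, using only $\mu$-strong convexity, which by its first-order form gives
\[
U_{0}\left(y\right)\geq U_{0}\left(x\right)+\left\langle \nabla U_{0}\left(x\right),y-x\right\rangle +\frac{\mu}{2}\left\Vert y-x\right\Vert ^{2}.
\]
(If $U_{0}$ is only $C^{1}$, we read the hypothesis $\nabla^{2}U_{0}\succeq\mu\mathbf{I}_{d}$ in this first-order form.) It therefore suffices, in each case, to compare $\mu\left\Vert v\right\Vert ^{2}$ with $\lambda\left\Vert v\right\Vert ^{2}_{L,\theta}$ for every $v$. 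Since both norms are block-diagonal, this reduces to a blockwise comparison of weights. Equivalently, we need $\mu\mathbf{I}_{d}\succeq\lambda\mathbf{D}_{\theta}$, which holds exactly when $\lambda\leq\mu\,\theta_{m}/L_{m}$ for every $m$.

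For $\theta^{L}$, the weight on block $m$ is $L_{m}/\theta^{L}_{m}=\sum_{j}L_{j}$, which is the same for every block. Hence $\mathbf{D}_{\theta^{L}}=\left(\sum_{j}L_{j}\right)\mathbf{I}_{d}$, and the choice $\lambda=\mu/\sum_{j}L_{j}$ satisfies $\mu\mathbf{I}_{d}=\lambda\mathbf{D}_{\theta^{L}}$. This gives \eqref{eq:smoothness-design}.

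For the uniform scan, the weights are $ML_{m}\leq ML_{\max}$, so $\mathbf{D}_{\mathrm{unif}}\preceq ML_{\max}\mathbf{I}_{d}$. Taking $\lambda=\mu/(ML_{\max})$ then gives $\mu\mathbf{I}_{d}\succeq\lambda\mathbf{D}_{\mathrm{unif}}$, which is \eqref{eq:smoothness-uniform-certificate}.

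The ratio \eqref{eq:smoothness-certificate-gain} follows by dividing the two lower bounds. For its range, $\sum_{m}L_{m}\leq ML_{\max}$ gives the lower bound $1$, and $\sum_{m}L_{m}\geq L_{\max}$ gives the upper bound $M$. To see that the ratio can approach $M$, take $L_{1}=1$ and $L_{m}=\delta$ for $m\geq2$. The ratio is then $M/(1+(M-1)\delta)$, which tends to $M$ as $\delta\downarrow0$.

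The only step needing care is to check that this extreme configuration is compatible with the hypotheses. The strong convexity hypothesis forces $\mu\leq L_{m}$ for every $m$, by restricting to block $m$ as in the remark after the definition. So we must send $\mu\downarrow0$ along with $\delta$. A concrete instance is $U_{0}(x)=\frac{1}{2}\sum_{m}L_{m}\left\Vert x_{m}\right\Vert ^{2}$ with $\mu=\delta$, which satisfies all hypotheses.

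Strictly speaking, the proposition claims only that the ratio of the certificates can be arbitrarily close to $M$, and since $\mu$ cancels in the ratio, this example suffices. I do not expect any real obstacle here.
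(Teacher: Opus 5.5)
Your proposal is correct and follows the paper's proof: both compute $\mathbf{D}_{\theta^{L}}=\left(\sum_{j}L_{j}\right)\mathbf{I}_{d}$ and $\mathbf{D}_{\mathrm{unif}}\preceq ML_{\max}\mathbf{I}_{d}$, then compare each with $\mu\mathbf{I}_{d}$ via the first-order strong convexity inequality. Your explicit check that the ratio lies in $[1,M]$, and your near-extremal separable example, fill in details the paper leaves implicit, and both are correct.
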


\begin{proof}
For $\theta^{L}$, $\mathbf{D}_{\theta^{L}}=\left(\sum^{M}_{j=1}L_{j}\right)\mathbf{I}_{d}$,
which proves \eqref{eq:smoothness-design}. For the uniform scan,
\[
\mathbf{D}_{\mathrm{unif}}=M\mathsf{diag}\left(L_{1}\mathbf{I}_{d_{1}},\ldots,L_{M}\mathbf{I}_{d_{M}}\right)\preceq ML_{\max}\mathbf{I}_{d},
\]
which proves \eqref{eq:smoothness-uniform-certificate} and the stated
ratio.
\end{proof}

The statistical relevance of this result is that block scales commonly
differ because of unequal group sizes, noise levels, or amounts of
design information. When detailed dependence information is unavailable,
the smoothness rule is an explicit and robust way to prevent the largest-scale
block from causing the convergence rate estimate to deteriorate too
aggressively. Similar considerations are discussed in \cite{nesterov2012coordinate}.

\subsection{Balanced nested random intercepts}\label{subsec:nested-example}

The preceding random-slopes example localises the difficult direction
in a few groups. A balanced random-intercepts model exhibits a contrasting
case in which the difficulty is spread symmetrically across all groups.
Consider
\begin{equation}
y_{ij}=\beta+u_{i}+\varepsilon_{ij},\qquad i=1,\ldots,G,\quad j=1,\ldots,n,\label{eq:nested-model-example}
\end{equation}
with independent priors and errors
\[
\beta\sim\mathcal{N}\left(0,\tau^{-1}_{\beta}\right),\qquad u_{i}\sim\mathcal{N}\left(0,\tau^{-1}_{u}\right),\qquad\varepsilon_{ij}\sim\mathcal{N}\left(0,\tau^{-1}_{e}\right).
\]
Take $\beta,u_{1},\ldots,u_{G}$ as the $G+1$ blocks and define
\[
A:=\tau_{\beta}+\tau_{e}Gn,\qquad B:=\tau_{u}+\tau_{e}n,\qquad r:=\frac{\tau_{e}n}{\sqrt{AB}},\qquad c:=\sqrt{G}\,r.
\]
Positive definiteness gives $0<c<1$.
\begin{thm}[Balanced nested random intercepts]
\label{thm:nested-intercepts} An optimal scan may be taken to have
$\theta_{\beta}=s$ and $\theta_{u_{i}}=b:=\left(1-s\right)/G$ for
$1\leq i\leq G$. For this scan,
\begin{equation}
\Lambda_{G}\left(s\right):=\lambda^{\star}_{\theta}=\frac{1}{2}\left[s+b-\sqrt{\left(s-b\right)^{2}+4r^{2}s\left(1-s\right)}\right].\label{eq:nested-example-lambda}
\end{equation}
Under the uniform scan,
\begin{equation}
\lambda^{\star}_{\mathrm{unif}}=\frac{1-c}{G+1}.\label{eq:nested-example-uniform}
\end{equation}
Let $s_{c}$ maximise $\Lambda_{G}$. In the weak-prior, nearly confounded
regime $c\uparrow1$,
\begin{equation}
s_{c}\longrightarrow\frac{1}{1+\sqrt{G}},\qquad\frac{\theta_{\beta}}{\theta_{u_{i}}}\longrightarrow\sqrt{G},\qquad\lambda^{\star}_{\mathrm{opt}}\sim\frac{1-c^{2}}{\left(1+\sqrt{G}\right)^{2}}.\label{eq:nested-example-optimum}
\end{equation}
Consequently,
\begin{equation}
\frac{\lambda^{\star}_{\mathrm{opt}}}{\lambda^{\star}_{\mathrm{unif}}}\longrightarrow\frac{2\left(G+1\right)}{\left(1+\sqrt{G}\right)^{2}},\label{eq:nested-example-gain}
\end{equation}
and the last expression tends to $2$ as $G\to\infty$.
\end{thm}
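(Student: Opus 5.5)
The plan is to reduce everything to an explicit $2\times2$ eigenvalue problem using Proposition~\ref{prop:gaussian-design} and Lemma~\ref{lem:orbit-averaging}, and then to carry out a one-variable asymptotic optimisation.

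\textbf{Step 1: the normalised precision.} The posterior precision has diagonal entries $\mathbf{Q}_{\beta\beta}=A$ and $\mathbf{Q}_{u_iu_i}=B$, off-diagonal entries $\mathbf{Q}_{\beta u_i}=\tau_e n$, and $\mathbf{Q}_{u_iu_j}=0$ for $i\neq j$. Hence
\[
\mathbf{R}=\begin{pmatrix}1 & r\mathbf{1}^{\mathsf{T}}\\ r\mathbf{1} & \mathbf{I}_G\end{pmatrix}.
\]
Its smallest eigenvalue is $1-\sqrt{G}\,r=1-c$, so positive definiteness gives $c<1$. The symmetric group $S_G$ permuting the $u_i$ blocks leaves $\mathbf{R}$ invariant. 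Lemma~\ref{lem:orbit-averaging} therefore lets us restrict to scans with $\theta_\beta=s$ and $\theta_{u_i}=b=(1-s)/G$.

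\textbf{Step 2: formula \eqref{eq:nested-example-lambda}.} For such a scan, $\mathbf{G}_\theta^{1/2}\mathbf{R}\mathbf{G}_\theta^{1/2}$ has entries $s$ at $(\beta,\beta)$, $b\mathbf{I}_G$ on the $u$-block, and $r\sqrt{sb}\,\mathbf{1}$ off the diagonal. It splits into two invariant pieces:
\begin{itemize}
\item on the $(G-1)$-dimensional space of $u$-vectors orthogonal to $\mathbf{1}$ (with $\beta$-component zero), it acts as multiplication by $b$;
\item on $\mathrm{span}\{e_\beta,\mathbf{1}/\sqrt{G}\}$, it acts as the $2\times2$ matrix
\[
\begin{pmatrix}s & r\sqrt{s(1-s)}\\ r\sqrt{s(1-s)} & b\end{pmatrix},
\]
using $Gsb=s(1-s)$.
\end{itemize}
The smaller eigenvalue of this $2\times2$ matrix is at most its diagonal entry $b$, so it is the global minimum. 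Proposition~\ref{prop:gaussian-design} then gives \eqref{eq:nested-example-lambda}. Setting $s=b=1/(G+1)$ gives $\Lambda=(1-\sqrt{G}\,r)/(G+1)$, which is \eqref{eq:nested-example-uniform}.

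\textbf{Step 3: asymptotics as $c\uparrow1$.} This is the main obstacle, because it needs uniform control in $s$. Write $T=s+b$ for the trace of the $2\times2$ block and
\[
\Delta=sb-r^2s(1-s)=\frac{(1-c^2)\,s(1-s)}{G}
\]
for its determinant. Since $\lambda_{\min}\lambda_{\max}=\Delta$ and $\lambda_{\max}=T-\Lambda$, we get
\[
\Lambda=\frac{\Delta}{T-\Lambda}.
\]
Two bounds follow.
\begin{itemize}
\item Since $\lambda_{\max}\geq T/2$, we have $\Lambda\leq 2\Delta/T=O(1-c^2)$.
\item Since $T\geq1/G$ on $[0,1]$, the relative error in the approximation $\Lambda\approx\Delta/T$ is $O\bigl((1-c^2)G\bigr)$.
\end{itemize}
Hence, uniformly in $s\in[0,1]$,
\[
\frac{\Lambda_G(s)}{1-c^2}\longrightarrow f(s):=\frac{s(1-s)}{1+(G-1)s}.
\]

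Next, maximise $f$. Setting $f'(s)=0$ reduces to $1-2s-(G-1)s^2=0$, whose root in $(0,1)$ is $s=1/(1+\sqrt{G})$. At this point $f=1/(1+\sqrt{G})^2$. Because $f$ is continuous with a unique maximiser on $[0,1]$, uniform convergence gives $s_c\to1/(1+\sqrt{G})$ and $\lambda^{\star}_{\mathrm{opt}}\sim(1-c^2)/(1+\sqrt{G})^2$. The ratio $\theta_\beta/\theta_{u_i}=Gs/(1-s)$ then tends to $\sqrt{G}$.

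\textbf{Step 4: the gain.} Dividing by \eqref{eq:nested-example-uniform} and using $(1-c^2)/(1-c)=1+c\to2$ gives \eqref{eq:nested-example-gain}. Since $(1+\sqrt{G})^2\sim G$, the limit $2(G+1)/(1+\sqrt{G})^2$ tends to $2$ as $G\to\infty$.
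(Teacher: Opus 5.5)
Your proposal is correct and follows the paper's route: you symmetrise via Lemma~\ref{lem:orbit-averaging}, reduce to the $2\times 2$ block on $\mathrm{span}\{e_\beta, G^{-1/2}\mathbf{1}_G\}$, and extract the $c\uparrow 1$ limit from the determinant being proportional to $1-c^2$; your limit $f(s)=s(1-s)/(1+(G-1)s)$ is exactly the paper's $t/((t+1)(t+G))$ under $t=Gs/(1-s)$. The one difference is that you work on the compact interval $s\in[0,1]$ and use $\Lambda=\Delta/(T-\Lambda)$ to get uniform convergence, which makes the convergence of maximisers fully rigorous, whereas the paper works pointwise in the non-compact variable $t$ and only asserts that maximising sequences cannot escape to the endpoints.
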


\begin{proof}
The block-normalised posterior precision is
\[
\mathbf{R}=\begin{pmatrix}1 & r\mathbf{1}^{\mathsf{T}}_{G}\\
r\mathbf{1}_{G} & \mathbf{I}_{G}
\end{pmatrix}.
\]
Lemma~\ref{lem:orbit-averaging} shows that an optimal scan may assign
the same probability $b$ to every $u_{i}$. Every contrast orthogonal
to $\mathbf{1}_{G}$ is then an eigenvector of the weighted normalised
precision with eigenvalue $b$. On the span of the global factor and
$G^{-1/2}\mathbf{1}_{G}$, the matrix is
\[
\mathbf{K}_{s}=\begin{pmatrix}s & r\sqrt{s\left(1-s\right)}\\
r\sqrt{s\left(1-s\right)} & b
\end{pmatrix}.
\]
Its smaller eigenvalue is below $b$, so Proposition~\ref{prop:gaussian-design}
gives \eqref{eq:nested-example-lambda}. At the uniform scan, the
two eigenvalues of $\mathbf{K}_{s}$ are $\left(1\pm c\right)/\left(G+1\right)$,
proving \eqref{eq:nested-example-uniform}. For the asymptotic optimisation,
set $t:=s/b=Gs/\left(1-s\right)$, so that $s=t/\left(G+t\right)$
and $b=1/\left(G+t\right)$. Then
\[
\Lambda_{G}\left(s\right)=\frac{1}{G+t}\lambda_{\min}\begin{pmatrix}t & c\sqrt{t}\\
c\sqrt{t} & 1
\end{pmatrix}.
\]
The determinant of the last matrix is $t\left(1-c^{2}\right)$, and
its larger eigenvalue tends to $t+1$ as $c\uparrow1$. Hence, for
fixed $t>0$,
\[
\frac{\Lambda_{G}\left(s\right)}{1-c^{2}}\longrightarrow\frac{t}{\left(t+1\right)\left(t+G\right)}.
\]
The limiting function has the unique maximiser $t=\sqrt{G}$; the
explicit eigenvalue formula excludes maximising sequences approaching
either endpoint. This proves \eqref{eq:nested-example-optimum}, and
division by \eqref{eq:nested-example-uniform} gives \eqref{eq:nested-example-gain}.
\end{proof}

Here the global mean is coupled diffusely to every group effect. The
optimal scan is nevertheless strongly non-uniform: it updates the
global block approximately $\sqrt{G}$ times as often as each local
block. Unlike the localised example, however, the gain in the contraction
parameter remains bounded and approaches two as the number of groups
grows.

These examples distinguish three sources of scan heterogeneity. A
difficult subsystem confined to $k$ of $M$ blocks can produce a
gain of order $M/k$; unequal block smoothness can yield an improvement
of order $M$ in guarantees based on coarse curvature information,
and diffuse hierarchical coupling can require markedly unequal probabilities
for optimality while giving only a bounded gain in overall convergence.
In each case the scan responds to the location of the slow geometry,
rather than merely to the number or labels of the blocks. The calculations
concern convergence per iteration; the separate effect of unequal
update costs is discussed in Section~\ref{sec:discussion}.

\section{Proofs}

\subsection{The common weighted interpolation estimate}\label{subsec:interpolation}

The two main proofs share the same deterministic estimate for the
potential. The algorithm-specific distinction appears only when the
entropy terms are handled.
\begin{lem}[Weighted coordinate interpolation]
\label{lem:potential} Suppose that Assumption~A holds and choose
$0<\lambda\leq\lambda^{\star}_{\theta}$. Set $t_{m}:=\frac{\lambda}{\theta_{m}}\in\left[0,1\right]$,
and, for $x,y\in\mathbb{R}^{d}$, define  $S_{m}\left(x,y\right):=\left(x_{-m},\left(1-t_{m}\right)x_{m}+t_{m}y_{m}\right)$.
Then 
\begin{equation}
\sum^{M}_{m=1}\theta_{m}U\left(S_{m}\left(x,y\right)\right)\leq\left(1-\lambda\right)U\left(x\right)+\lambda U\left(y\right).\label{eq:potential-interpolation}
\end{equation}
\end{lem}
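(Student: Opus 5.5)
We split $U=U_{0}+\sum_{m}U_{m}$ and treat the separable part and the coupled part separately, then add the two estimates.

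\emph{Separable part.} For each $m$, the function $U_{j}$ with $j\neq m$ is unchanged by $S_{m}$, since block $j$ of $S_{m}(x,y)$ is $x_{j}$. Block $m$ is moved to $(1-t_{m})x_{m}+t_{m}y_{m}$. Convexity of $U_{m}$ with $t_{m}\in[0,1]$ gives
\[
U_{m}\bigl((1-t_{m})x_{m}+t_{m}y_{m}\bigr)\leq(1-t_{m})U_{m}(x_{m})+t_{m}U_{m}(y_{m}).
\]
Now average over $m$ with weights $\theta_{m}$. For fixed $j$, the coefficient of $U_{j}(x_{j})$ is $\sum_{m\neq j}\theta_{m}+\theta_{j}(1-t_{j})=1-\lambda$, and the coefficient of $U_{j}(y_{j})$ is $\theta_{j}t_{j}=\lambda$. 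Hence
\[
\sum_{m}\theta_{m}\sum_{j}U_{j}\bigl(S_{m}(x,y)_{j}\bigr)\leq(1-\lambda)\sum_{j}U_{j}(x_{j})+\lambda\sum_{j}U_{j}(y_{j}).
\]

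\emph{Coupled part.} Apply the block smoothness bound \eqref{eq:block-smooth} at $x$ with increment $t_{m}(y_{m}-x_{m})$:
\[
U_{0}(S_{m}(x,y))\leq U_{0}(x)+t_{m}\langle\nabla_{m}U_{0}(x),y_{m}-x_{m}\rangle+\tfrac{L_{m}t_{m}^{2}}{2}\Vert y_{m}-x_{m}\Vert^{2}.
\]
Multiply by $\theta_{m}$ and sum. Since $\theta_{m}t_{m}=\lambda$ and $\theta_{m}t_{m}^{2}L_{m}=\lambda^{2}L_{m}/\theta_{m}$, this yields
\[
\sum_{m}\theta_{m}U_{0}(S_{m}(x,y))\leq U_{0}(x)+\lambda\langle\nabla U_{0}(x),y-x\rangle+\tfrac{\lambda^{2}}{2}\Vert y-x\Vert_{L,\theta}^{2}.
\]
Because $\lambda\leq\lambda^{\star}_{\theta}$, the definition \eqref{eq:selection-convexity} holds with $\lambda$ in place of $\lambda^{\star}_{\theta}$. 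Rearranged, it reads
\[
\langle\nabla U_{0}(x),y-x\rangle+\tfrac{\lambda}{2}\Vert y-x\Vert_{L,\theta}^{2}\leq U_{0}(y)-U_{0}(x).
\]
Multiplying this by $\lambda$ and substituting shows that the right-hand side above is at most $(1-\lambda)U_{0}(x)+\lambda U_{0}(y)$.

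Adding the separable and coupled estimates gives \eqref{eq:potential-interpolation}.

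The only delicate point is the choice of step sizes. Setting $t_{m}=\lambda/\theta_{m}$ makes the linear terms recombine exactly into the full gradient pairing $\lambda\langle\nabla U_{0}(x),y-x\rangle$. It also makes the quadratic terms produce precisely $\lambda^{2}$ times the weighted norm $\Vert\cdot\Vert_{L,\theta}$, which is the norm in which selection-adapted convexity is assumed. We also need $t_{m}\leq1$ for the convexity step on the $U_{m}$; this follows from $\lambda\leq\lambda^{\star}_{\theta}\leq\theta_{\min}$, recorded after the definition.
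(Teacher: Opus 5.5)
Your proof is correct and follows the paper's argument essentially step for step. You use block smoothness with increment $t_m(y_m-x_m)$, weight by $\theta_m$ so the linear and quadratic terms recombine into $\lambda\langle\nabla U_0(x),y-x\rangle+\tfrac{\lambda^2}{2}\Vert y-x\Vert^2_{L,\theta}$, and close with selection-adapted convexity at level $\lambda$. The separable part is handled by the same coefficient bookkeeping, giving $1-\lambda$ and $\lambda$ from blockwise convexity.
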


\begin{proof}
Write $\delta=y-x$. Block smoothness gives 
\[
U_{0}\left(S_{m}\left(x,y\right)\right)\leq U_{0}\left(x\right)+t_{m}\left\langle \nabla_{m}U_{0}\left(x\right),\delta_{m}\right\rangle +\frac{L_{m}t^{2}_{m}}{2}\left\Vert \delta_{m}\right\Vert ^{2}.
\]
Multiplying by $\theta_{m}$, summing, and using $\theta_{m}t_{m}=\lambda$,
\begin{equation}
\sum^{M}_{m=1}\theta_{m}U_{0}\left(S_{m}\left(x,y\right)\right)\leq U_{0}\left(x\right)+\lambda\left\langle \nabla U_{0}\left(x\right),\delta\right\rangle +\frac{\lambda^{2}}{2}\left\Vert \delta\right\Vert ^{2}_{L,\theta}.\label{eq:potential-smooth-step}
\end{equation}
Selection-adapted convexity gives
\[
\lambda U_{0}\left(y\right)\geq\lambda U_{0}\left(x\right)+\lambda\left\langle \nabla U_{0}\left(x\right),\delta\right\rangle +\frac{\lambda^{2}}{2}\left\Vert \delta\right\Vert ^{2}_{L,\theta}.
\]
Comparison with \eqref{eq:potential-smooth-step} yields 
\begin{equation}
\sum^{M}_{m=1}\theta_{m}U_{0}\left(S_{m}\left(x,y\right)\right)\leq\left(1-\lambda\right)U_{0}\left(x\right)+\lambda U_{0}\left(y\right).\label{eq:potential-U0}
\end{equation}

For the separable part, convexity of $U_{j}$ gives 
\[
U_{j}\left(\left(1-t_{j}\right)x_{j}+t_{j}y_{j}\right)\leq\left(1-t_{j}\right)U_{j}\left(x_{j}\right)+t_{j}U_{j}\left(y_{j}\right).
\]
It follows that
\begin{align*}
 & \sum^{M}_{m=1}\theta_{m}\sum^{M}_{j=1}U_{j}\left(\left(S_{m}\left(x,y\right)\right)_{j}\right)\\
 & \quad=\sum^{M}_{j=1}\left[\left(1-\theta_{j}\right)U_{j}\left(x_{j}\right)+\theta_{j}U_{j}\left(\left(1-t_{j}\right)x_{j}+t_{j}y_{j}\right)\right]\\
 & \quad\leq\left(1-\lambda\right)\sum^{M}_{j=1}U_{j}\left(x_{j}\right)+\lambda\sum^{M}_{j=1}U_{j}\left(y_{j}\right).
\end{align*}
Adding this inequality to \eqref{eq:potential-U0} proves the claim. 
\end{proof}

\subsection{Proof of the Gibbs result}\label{subsec:gibbs-proof}

For a probability measure $\nu$ with Lebesgue density $\rho$, define
the potential energy and negative entropy 
\[
\mathcal{U}\left(\nu\right):=\int_{\mathbb{R}^{d}}U\,\mathrm{d}\nu,\qquad\mathcal{H}\left(\nu\right):=\int_{\mathbb{R}^{d}}\rho\log\rho\,\mathrm{d}x.
\]
Whenever the terms are finite, it holds that $\mathsf{KL}\left(\nu\mid\pi\right)=\mathcal{U}\left(\nu\right)+\mathcal{H}\left(\nu\right)$.
The entropy part of the proof is the following weighted form of the
triangular transport estimate in \cite{ascolani2026entropy}.
\begin{lem}[Weighted triangular entropy interpolation]
\label{lem:entropy} Let $\mu$ be absolutely continuous, let $T:\mathbb{R}^{d}\to\mathbb{R}^{d}$
be an increasing triangular $C^{1}$-diffeomorphism with respect to
a scalar ordering compatible with the block ordering, and let $0<\lambda\leq\theta_{\min}$.
Define 
\begin{equation}
\widehat{T}^{\,m}\left(x\right):=S_{m}\left(x,T\left(x\right)\right),\qquad t_{m}=\frac{\lambda}{\theta_{m}}.\label{eq:partial-transport}
\end{equation}
Then 
\begin{equation}
\sum^{M}_{m=1}\theta_{m}\mathcal{H}\left(\left(\widehat{T}^{\,m}\right)_{\#}\mu\right)\leq\left(1-\lambda\right)\mathcal{H}\left(\mu\right)+\lambda\mathcal{H}\left(T_{\#}\mu\right).\label{eq:entropy-interpolation}
\end{equation}
\end{lem}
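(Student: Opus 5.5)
We use the change-of-variables formula for pushforwards by diffeomorphisms. If $S$ is a $C^{1}$-diffeomorphism and $\nu$ has density $\rho$, then
\[
\mathcal{H}\left(S_{\#}\nu\right)=\mathcal{H}\left(\nu\right)-\int\log\det\nabla S\,\mathrm{d}\nu .
\]
Because $T$ is an increasing triangular $C^{1}$ map for a scalar ordering compatible with the blocks, its Jacobian $\nabla T(x)$ is lower triangular with strictly positive diagonal entries $\partial_{i}T_{i}(x)$. Hence $\log\det\nabla T(x)=\sum_{m}\sum_{i\in m}\log\partial_{i}T_{i}(x)$, which splits into blockwise contributions $\ell_{m}(x):=\sum_{i\in\text{block }m}\log\partial_{i}T_{i}(x)$.

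\textbf{Step 1: structure of $\widehat{T}^{\,m}$.} The map $\widehat{T}^{\,m}$ leaves $x_{-m}$ fixed and sends $x_{m}\mapsto(1-t_{m})x_{m}+t_{m}T_{m}(x)$. Within block $m$, $T_{m}(x)$ depends only on coordinates up to the current one, and in particular only on earlier blocks and earlier coordinates of block $m$. So $\widehat{T}^{\,m}$ is again triangular in the same scalar ordering. Its diagonal entries are $1$ outside block $m$ and $(1-t_{m})+t_{m}\partial_{i}T_{i}(x)>0$ inside block $m$, using $t_{m}=\lambda/\theta_{m}\in(0,1]$ because $\lambda\leq\theta_{\min}$. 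An increasing triangular map with positive diagonal derivatives is injective; I would take it to be a diffeomorphism onto its image, or at least apply the change-of-variables inequality for injective a.e.-differentiable maps, as in \cite{ascolani2026entropy}.

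\textbf{Step 2: entropy of each partial transport.} By Step 1,
\[
\mathcal{H}\bigl((\widehat{T}^{\,m})_{\#}\mu\bigr)=\mathcal{H}(\mu)-\int\sum_{i\in m}\log\bigl((1-t_{m})+t_{m}\partial_{i}T_{i}\bigr)\,\mathrm{d}\mu .
\]
Concavity of $\log$ gives $\log\bigl((1-t_{m})+t_{m}a\bigr)\geq t_{m}\log a$ for $a>0$, so
\[
\mathcal{H}\bigl((\widehat{T}^{\,m})_{\#}\mu\bigr)\leq\mathcal{H}(\mu)-t_{m}\int\ell_{m}\,\mathrm{d}\mu .
\]

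\textbf{Step 3: averaging.} Multiply by $\theta_{m}$ and sum, using $\theta_{m}t_{m}=\lambda$:
\[
\sum_{m}\theta_{m}\mathcal{H}\bigl((\widehat{T}^{\,m})_{\#}\mu\bigr)\leq\mathcal{H}(\mu)-\lambda\int\log\det\nabla T\,\mathrm{d}\mu .
\]
The change-of-variables formula for $T$ gives $\mathcal{H}(T_{\#}\mu)=\mathcal{H}(\mu)-\int\log\det\nabla T\,\mathrm{d}\mu$, so the right-hand side equals
\[
\mathcal{H}(\mu)+\lambda\bigl[\mathcal{H}(T_{\#}\mu)-\mathcal{H}(\mu)\bigr]=(1-\lambda)\mathcal{H}(\mu)+\lambda\mathcal{H}(T_{\#}\mu),
\]
which is \eqref{eq:entropy-interpolation}.

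\textbf{Main obstacle.} The algebra is short. The real difficulty is justifying the change of variables and the integrability of $\log\det$ terms when entropies may be infinite. I would assume $\mathcal{H}(\mu)$ and $\mathcal{H}(T_{\#}\mu)$ are finite, since otherwise the right-hand side is $+\infty$ and the claim is trivial. Integrability of $\log\det\nabla T$ then follows. The partial maps' $\log$-Jacobians are bounded below by $t_{m}\ell_{m}$ and above by $\log$ of an affine function. The inequality is therefore preserved even where only one-sided integrability holds, interpreting integrals in the extended sense.
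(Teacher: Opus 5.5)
Your proof is correct and follows the paper's argument essentially line for line: change of variables for the triangular partial maps $\widehat{T}^{\,m}$, the concavity bound $\log(1-t+ta)\geq t\log a$ on the block-$m$ diagonal Jacobian entries, and averaging with $\theta_m t_m=\lambda$ to reassemble $\log\det\nabla T$ and hence $\mathcal{H}(T_{\#}\mu)$. The paper treats integrability only formally and defers regularity to the approximation argument of \cite{ascolani2026entropy}; for your extra integrability remark, it is safer to sum the pointwise bounds $\sum_m\theta_m\log\det\nabla\widehat{T}^{\,m}\geq\lambda\log\det\nabla T$ before integrating, as the paper does, because the blockwise terms $\ell_m$ need not be individually $\mu$-integrable even when their sum is.
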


\begin{proof}
The derivative $\mathrm{D}T\left(x\right)$ is lower triangular with
positive diagonal entries. Set $\mathbf{A}_{m}\left(x\right):=\left[\mathrm{D}T\left(x\right)\right]_{mm}$.
Each partial map $\widehat{T}^{\,m}$ is triangular and $\det\mathrm{D}\widehat{T}^{\,m}\left(x\right)=\det\left(\left(1-t_{m}\right)\mathbf{I}_{d_{m}}+t_{m}\mathbf{A}_{m}\left(x\right)\right)$.
The change-of-variables formula therefore gives 
\begin{align*}
\mathcal{H}\left(\left(\widehat{T}^{\,m}\right)_{\#}\mu\right) & =\mathcal{H}\left(\mu\right)-\int_{\mathbb{R}^{d}}\log\det\left(\left(1-t_{m}\right)\mathbf{I}_{d_{m}}+t_{m}\mathbf{A}_{m}\left(x\right)\right)\mu\left(\mathrm{d}x\right).
\end{align*}
For $a>0$ and $t\in\left[0,1\right]$, concavity of the logarithm
gives $\log\left(1-t+ta\right)\geq t\log a$. Applying this to the
diagonal entries of $\mathbf{A}_{m}\left(x\right)$, and using $\theta_{m}t_{m}=\lambda$,
yields that
\begin{align*}
 & \sum^{M}_{m=1}\theta_{m}\mathcal{H}\left(\left(\widehat{T}^{\,m}\right)_{\#}\mu\right)\\
 & \quad\leq\mathcal{H}\left(\mu\right)-\lambda\int_{\mathbb{R}^{d}}\sum^{M}_{m=1}\log\det\mathbf{A}_{m}\left(x\right)\mu\left(\mathrm{d}x\right)\\
 & \quad=\mathcal{H}\left(\mu\right)-\lambda\int_{\mathbb{R}^{d}}\log\det\mathrm{D}T\left(x\right)\mu\left(\mathrm{d}x\right)\\
 & \quad=\left(1-\lambda\right)\mathcal{H}\left(\mu\right)+\lambda\mathcal{H}\left(T_{\#}\mu\right),
\end{align*}
where triangularity gives $\det\mathrm{D}T=\prod_{m}\det\mathbf{A}_{m}$. 
\end{proof}

\begin{proof}[Proof of Theorem~\ref{thm:gibbs}]
Set $\lambda=\lambda^{\star}_{\theta}$. We first give the central
argument when the measures and maps are regular enough for the preceding
formulas to apply. Let $T$ be the increasing Knothe--Rosenblatt
transport from $\mu$ to $\pi$, so that $T_{\#}\mu=\pi$, and define
$\widehat{T}^{\,m}$ by \eqref{eq:partial-transport}.

Lemma~\ref{lem:potential}, applied pointwise with $y=T\left(x\right)$
and integrated against $\mu$, gives 
\begin{equation}
\sum^{M}_{m=1}\theta_{m}\mathcal{U}\left(\left(\widehat{T}^{\,m}\right)_{\#}\mu\right)\leq\left(1-\lambda\right)\mathcal{U}\left(\mu\right)+\lambda\mathcal{U}\left(\pi\right).\label{eq:gibbs-potential}
\end{equation}
Lemma~\ref{lem:entropy} gives the same inequality for $\mathcal{H}$.
Adding the two and decomposing the $\mathsf{KL}$ term gives 
\begin{equation}
\sum^{M}_{m=1}\theta_{m}\mathsf{KL}\left(\left(\widehat{T}^{\,m}\right)_{\#}\mu\mid\pi\right)\leq\left(1-\lambda\right)\mathsf{KL}\left(\mu\mid\pi\right),\label{eq:gibbs-transport-bound}
\end{equation}
because $\mathsf{KL}\left(\pi\mid\pi\right)=0$.

The map $\widehat{T}^{\,m}$ leaves $x_{-m}$ unchanged. Hence $\left(\widehat{T}^{\,m}\right)_{\#}\mu$
and $\mu$ have the same $(-m)$-marginal, and \eqref{eq:gibbs-variational}
implies that $\mathsf{KL}\left(\mu_{-m}\mid\pi_{-m}\right)\leq\mathsf{KL}\left(\left(\widehat{T}^{\,m}\right)_{\#}\mu\mid\pi\right)$.
Multiplying by $\theta_{m}$, summing, and applying \eqref{eq:gibbs-transport-bound}
proves \eqref{eq:weighted-tensorization}. The chain rule 
\begin{align*}
\mathsf{KL}\left(\mu\mid\pi\right) & =\mathsf{KL}\left(\mu_{-m}\mid\pi_{-m}\right)+\mathbf{E}_{\mu_{-m}}\left[\mathsf{KL}\left(\mu\left(\mathord{\cdot}\mid X_{-m}\right)\mathrel{\big|}\pi\left(\mathord{\cdot}\mid X_{-m}\right)\right)\right]
\end{align*}
then gives \eqref{eq:weighted-conditional}.

Finally, convexity of relative entropy and \eqref{eq:gibbs-variational}
yield 
\begin{align*}
\mathsf{KL}\left(\mu P^{\theta}\mid\pi\right) & \leq\sum^{M}_{m=1}\theta_{m}\mathsf{KL}\left(\mu P_{m}\mid\pi\right)\\
 & =\sum^{M}_{m=1}\theta_{m}\mathsf{KL}\left(\mu_{-m}\mid\pi_{-m}\right)\\
 & \leq\left(1-\lambda\right)\mathsf{KL}\left(\mu\mid\pi\right),
\end{align*}
which is \eqref{eq:gibbs-contraction}. As in \cite[Section~5.4]{ascolani2026entropy},
truncation and regularisation remove the temporary smoothness and
diffeomorphism assumptions; those approximation steps are unchanged
by the weights. 
\end{proof}

\subsection{Proof of the CAVI result}\label{subsec:cavi-proof}
\begin{proof}[Proof of Theorem~\ref{thm:cavi}]
Set $\lambda=\lambda^{\star}_{\theta}$, and fix $q=q_{1}\otimes\cdots\otimes q_{M}$
with $F\left(q\right)<+\infty$. For each $m$, let $T_{m}$ be the
quadratic optimal transport from $q_{m}$ to $q^{\star}_{m}$, and
write $T\left(x\right)=\left(T_{1}\left(x_{1}\right),\ldots,T_{M}\left(x_{M}\right)\right)$,
whereby $T_{\#}q=q^{\star}$.

Define $x^{\left[m\right]}:=S_{m}\left(x,T\left(x\right)\right)$,
$t_{m}=\frac{\lambda}{\theta_{m}}$, and let $q^{\left[m\right]}$
be the law of $x^{\left[m\right]}$ under $x\sim q$. This measure
differs from $q$ only in factor $m$, whose law lies at time $t_{m}$
on the quadratic Wasserstein geodesic from $q_{m}$ to $q^{\star}_{m}$.
By the variational definition of the CAVI update, 
\begin{equation}
F\left(\mathcal{C}_{m}q\right)\leq F\left(q^{\left[m\right]}\right).\label{eq:cavi-variational-bound}
\end{equation}

Lemma~\ref{lem:potential}, integrated with respect to $q$, gives
\begin{equation}
\sum^{M}_{m=1}\theta_{m}\int_{\mathcal{X}}U_{0}\,\mathrm{d}q^{\left[m\right]}\leq\left(1-\lambda\right)\int_{\mathcal{X}}U_{0}\,\mathrm{d}q+\lambda\int_{\mathcal{X}}U_{0}\,\mathrm{d}q^{\star}.\label{eq:cavi-potential}
\end{equation}
We used only the $U_{0}$ part of the lemma here; it is convenient
to combine the separable potentials with entropy. For a probability
measure $\nu$ on $\mathcal{X}_{m}$, define $\mathcal{G}_{m}\left(\nu\right):=\int_{\mathcal{X}_{m}}U_{m}\,\mathrm{d}\nu+\mathcal{H}\left(\nu\right)$.
Since $U_{m}$ is convex and negative entropy is displacement convex,
$\mathcal{G}_{m}$ is convex along quadratic Wasserstein geodesics.
If $q_{m,t}$ is the geodesic from $q_{m}$ to $q^{\star}_{m}$, then
\begin{equation}
\mathcal{G}_{m}\left(q_{m,t_{m}}\right)\leq\left(1-t_{m}\right)\mathcal{G}_{m}\left(q_{m}\right)+t_{m}\mathcal{G}_{m}\left(q^{\star}_{m}\right).\label{eq:Gm-convexity}
\end{equation}
Only factor $m$ changes in $q^{\left[m\right]}$. Therefore 
\begin{align}
 & \sum^{M}_{m=1}\theta_{m}\sum^{M}_{j=1}\mathcal{G}_{j}\left(\left(q^{\left[m\right]}\right)_{j}\right)\nonumber \\
 & \quad=\sum^{M}_{j=1}\left[\left(1-\theta_{j}\right)\mathcal{G}_{j}\left(q_{j}\right)+\theta_{j}\mathcal{G}_{j}\left(q_{j,t_{j}}\right)\right]\nonumber \\
 & \quad\leq\left(1-\lambda\right)\sum^{M}_{j=1}\mathcal{G}_{j}\left(q_{j}\right)+\lambda\sum^{M}_{j=1}\mathcal{G}_{j}\left(q^{\star}_{j}\right).\label{eq:cavi-separable}
\end{align}
For product measures, 
\[
F\left(q\right)=\int_{\mathcal{X}}U_{0}\,\mathrm{d}q+\sum^{M}_{m=1}\mathcal{G}_{m}\left(q_{m}\right).
\]
Adding \eqref{eq:cavi-potential} and \eqref{eq:cavi-separable} gives
\begin{equation}
\sum^{M}_{m=1}\theta_{m}F\left(q^{\left[m\right]}\right)\leq\left(1-\lambda\right)F\left(q\right)+\lambda F\left(q^{\star}\right).\label{eq:cavi-comparison}
\end{equation}
Together with \eqref{eq:cavi-variational-bound}, this proves \eqref{eq:cavi-one-step}.
Conditional expectation gives \eqref{eq:cavi-conditional}, and the
tower property then yields \eqref{eq:cavi-iterate}. 
\end{proof}

\section{Discussion}\label{sec:discussion}

An immediate question is how to choose $\theta$. When a fixed lower
curvature matrix is available, computable lower bounds on $\lambda^{\star}_{\theta}$
can be obtained from eigenvalue-type calculations. More generally,
the exact selection-adapted convexity constant may be unavailable,
and one may instead optimise a computable lower bound obtained from
more tractable global and blockwise curvature estimates. Such a procedure
optimises our bound on the contraction rate rather than necessarily
the exact convergence rate. It would therefore be useful to examine
the stability of the resulting scan under errors in the curvature
quantities and to develop robust choices when only bounds or estimates
are available.

The weighted geometry encoded by $\left\Vert \cdot\right\Vert _{L,\theta}$
gives some indication as to the active ingredients in this optimisation
problem, and what phenomena one seeks to balance. Increasing $\theta_{m}$
relaxes the relative convexity requirement for the $m$th block, while
implicitly making the requirement on other blocks more stringent.
An effective scan therefore assigns effort to the blocks which participate
most strongly in the `least convex' directions, while retaining enough
probability on the remaining blocks that the coordinate-wise upper
bound $\lambda^{\star}_{\theta}\leq\theta_{\min}$ does not become
limiting. In effect, then, scan choice can often be understood less
as updating `rougher' blocks more frequently, and more as a mechanism
for effectively exposing convexity and suppressing strong cross-block
dependence (which are often two sides of the same coin).

Scan design should be distinguished from the choice of coordinates
and blocking structure. An invertible transformation within each block
leaves the Gibbs sampler and CAVI equivalent, up to applying the inverse
transformation, but the scalar convexity and smoothness constants
used in a complexity certificate are \emph{not} invariant under such
a transformation. Blockwise reparameterisation can therefore sharpen
the available bound without changing the underlying algorithm in an
essential way. This contrasts with the non-uniform selection mechanism,
which genuinely changes the law of the iterates themselves. 

More structural choices, including blocking, centering, non-centering,
and partial centering, may change both the dependence structure and
the cost of a coordinate update. For CAVI, they may also change the
mean-field family and hence the approximation being sought. These
choices are complementary to scan adaptation: one may first choose
a representation which exposes weak dependence and tractable updates,
and then select $\theta$ to account for the residual heterogeneity
across the resulting blocks. In regimes where a strongly coupled collection
of variables can be updated jointly at reasonable cost, changing the
blocking may be more consequential than changing the scan probabilities.

Update costs can also vary considerably across blocks. If block $m$
costs $c_{m}$ to update, then maximising $\lambda^{\star}_{\theta}$
per iteration need not be optimal from the perspective of convergence
per unit of computation. A natural fixed-scan proxy for the contraction
rate per expected unit cost is
\[
\frac{-\log\left(1-\lambda^{\star}_{\theta}\right)}{\sum^{M}_{m=1}\theta_{m}c_{m}}\sim\frac{\lambda^{\star}_{\theta}}{\sum^{M}_{m=1}\theta_{m}c_{m}}\qquad\text{as }\lambda^{\star}_{\theta}\to0.
\]
This criterion is only a first approximation because computational
costs may depend on the implementation and the current state, but
it makes clear that the preferred scan can depend on whether convergence
is measured per iteration or per unit computation.

Taken together, these considerations position $\lambda^{\star}_{\theta}$
as a principled fixed-scan design criterion, rather than a universal
notion of optimality. Once the parameterisation and blocking structure
have been fixed, it provides a common way to relate the residual block
geometry to the allocation of updates in both Gibbs sampling and CAVI.
A scan chosen in this way should therefore be understood as optimising
a concrete, theoretically grounded criterion, subject to the separate
trade-offs created by computational cost and uncertainty in the curvature
information.

\section{Acknowledgements}\label{sec:acknowledgments}

During the preparation of this work, the author used OpenAI ChatGPT
and OpenAI Codex to explore proof strategies and assist with the drafting
and revision of the manuscript. In particular, discussions with ChatGPT
suggested the weighted adaptation of the coordinate interpolation
argument developed in Section~\ref{subsec:interpolation}, building
on the argument in \cite{ascolani2026entropy}. The author independently
checked the resulting arguments and takes full responsibility for
the contents of the paper.

An initial draft of this work treated only the Gibbs sampler; feedback
from Giacomo Zanella motivated the author to explore the extension
to CAVI. Additional feedback from Filippo Ascolani and Hugo Lavenant
was useful in guiding which secondary results to include and highlight.
All of these inputs are gratefully acknowledged.

\bibliographystyle{plain}
\bibliography{selection_adapted_convexity}

\end{document}